\newcommand\N{
	\mathbb{N}
}
\newcommand\F{
	\mathcal{F}
}
\newcommand{\ev}[1]{
	\mathbb{E}\left[ #1 \right]
}
\newcommand{\evM}[2]{
	\mathbb{E}^\mathbb{#1}\left[ #2 \right]
}
\newcommand\R{
	\mathbb{R}
}
\newcommand{\qvar}[1]{
	\langle #1 \rangle
}
\newcommand{\var}{
	\text{var}
}
\newcommand{\ind}[1]{
	\mathds{1}_{#1}
}
\DeclarePairedDelimiter\abs{\lvert}{\rvert}%
\DeclarePairedDelimiter\norm{\lVert}{\rVert}%
\let\oldabs\abs
\def\abs{\@ifstar{\oldabs}{\oldabs*}}
\let\oldnorm\norm
\def\norm{\@ifstar{\oldnorm}{\oldnorm*}}
\newtheoremstyle{break}
  {\topsep}{\topsep}%
  {\upshape}{}%
  {\bfseries}{}%
  {\newline}{}%
\theoremstyle{break}
\newtheorem{thm}{Theorem}[section]
\newtheorem{lem}[thm]{Lemma}
\newtheorem{prop}[thm]{Proposition}
\newtheorem{dfn}{Definition}[section]
\theoremstyle{remark}
\newtheorem{rmk}{Remark}
\theoremstyle{plain} %
\newcommand{\thistheoremname}{}
\newtheorem{genericthm}[thm]{\thistheoremname}
\newtheoremstyle{mystyle}%
  {}%
  {}%
  {\itshape}%
  {}%
  {\bfseries}%
  {.}%
  { }%
  {\thmname{#1}\thmnumber{ #2}\thmnote{ (#3)}}%
\def\cleartheorem#1{%
    \expandafter\let\csname#1\endcsname\relax
    \expandafter\let\csname c@#1\endcsname\relax
}
\newtheorem{rmk}{Remark}[section]
\let\oldpar\paragraph
\renewcommand{\paragraph}[1]{\oldpar{#1} \mbox{} \\}
\providecommand{\keywords}[1]
{
  \small	
  \textbf{\textit{Keywords---}} #1
}
\newcommand\fOU{fractional Ornstein-Uhlenbeck }
\newcommand\OU{Ornstein-Uhlenbeck }
\newcommand\ML{Mittag-Leffler function}
\newcommand\st{(\textbf{s}, \textbf{t})}
\newcommand\sti{(\textbf{s}, \textbf{t}_i)}
\newcommand\Wp{W^\mathbb{P}}
\newcommand\Bp{B^\mathbb{P}}
\newtheorem{assumption}{Assumption}
\renewcommand{\Function}[2]{%
  \csname ALG@cmd@\ALG@L @Function\endcsname{#1}{#2}%
  \def\jayden@currentfunction{#1}%
}
\newcommand{\funclabel}[1]{%
  \@bsphack
  \protected@write\@auxout{}{%
    \string\newlabel{#1}{{\jayden@currentfunction}{\thepage}}%
  }%
  \@esphack
}
\numberwithin{equation}{section}
\begin{document}

\author{Henrique Guerreiro \footnote{Supported by FCT Grant SFRH/BD/147161/2019.}
\\
hguerreiro@iseg.ulisboa.pt
 \and João Guerra 
\footnote{Partially supported by the project CEMAPRE/REM-UiDB/05069/2020 - financed by FCT/MCTES through national funds.} 
 \\
jguerra@iseg.ulisboa.pt\\
}
\date{
ISEG - School of Economics and Management, Universidade de Lisboa \\
REM - Research in Economics and Mathematics, CEMAPRE \\
Rua do Quelhas 6, 1200-781 Lisboa, Portugal \\
\vspace{10pt}
\today 
}
\title{
VIX pricing in the rBergomi model under a regime switching change of measure
}

\bibliographystyle{apalike}

\maketitle
\begin{abstract}
The rBergomi model under the physical measure consists of modeling the log-variance as a truncated Brownian semi-stationary process. Then, a deterministic change of measure is applied. The rBergomi model is able to reproduce observed market SP500 smiles with few parameters, but by virtue of the deterministic change of measure, produces flat VIX smiles, in contrast to the upward sloping smiles observed in the market. 
We use the exact solution for a certain inhomogeneous fractional Ornstein-Uhlenbeck equation to build a regime switching stochastic change of measure for the rBergomi model that both yields upward slopping VIX smiles and is equipped with an efficient semi-analytic Monte Carlo method to price VIX options. 
The model also allows an approximation of the VIX, which leads to a significant reduction of the computational cost of pricing VIX options and futures. A variance reduction technique based on the underlying continuous time Markov chain allows us to further reduce the computational cost.
We verify the capabilities of our model by calibrating it to  observed market smiles and discuss the results.
\end{abstract}

\keywords{fractional Ornstein-Uhlenbeck process, rough volatility, VIX option pricing}

\section{Notation}

The notation $L^p$ always means $L^p(\R^d)$ for $d \in \N_1$. If we wish to talk about $L^p$ with respect a specific measurable set $A \subset \R^d$, we write $L^p(A)$. The same applies to $L^p_{loc}$.

We shall denote the physical (or real world) measure by $\mathbb{P}$ and the pricing (or risk neutral) measure by $\mathbb{Q}$. Expected values and Brownian motions are taken with respect to the pricing measure, except when indicated with a $\mathbb{P}$ in superscript. 

The forward variance curve is denoted by
\begin{equation}
\xi_t(u) = \ev{ v_u \mid \F_t}.
\end{equation}
We denote the convolution operator by $\star$, meaning
\begin{equation}
(f \star g )(t) = \int f(s) g(t-s)\,ds.
\end{equation}
If $f, g$ have support in $\R^+$ it becomes
\begin{equation}
(f \star g )(t) = \int_0^t f(s) g(t-s)\,ds.
\end{equation}
For $K\in L^2([0,T])$ and a continuous semi-martingale $dM = bds + \sigma dB$, with $b, \sigma$ locally bounded and adapted, and $B$ a standard Brownian motion (sBm), we may also define the convolution for $t \in [0, T]$ as 
\begin{equation}
(K \star dM)(t) = \int_0^t K(t-s)b_s ds + \int_0^t K(t-s)\sigma_s dB_s.
\end{equation}
We use the notation $\mathcal{E}$ for the stochastic exponential:
\begin{equation}
\mathcal{E}(X)_t = \exp\left(X_t - \frac{1}{X} \qvar{X}_t\right).
\end{equation}

\section{Introduction}

Finding a mathematical model that reproduces the key features of observed market smiles has been a longstanding problem in mathematical finance. To this end, \cite{VolIsRough} have introduced rough volatility models, where the log-variance behaves similarly to a fractional Brownian Motion (fBm).
The rBergomi model, introduced in \cite{PricingRough}, is a rough volatility model that is able to adjust very well to the smiles of the SP500 with a small number of parameters. Moreover, it produces a power-law decaying \textit{at the money skew}, a feature that is not shared by many conventional stochastic volatility models. The rBergomi model is first obtained by modeling the log-variance as a truncated Brownian semi-stationary process (TBSS), which is motivated by the empirical finding that increments of log-volatility behave similarly to those of fBm. Afterwards, a deterministic change of measure is applied, preserving analytic tractability. For evidence for rough volatility see \cite{Alos2007}, \cite{Microstructure}, \cite{Fukasawa2020VolatilityHT} and \cite{Livieri2018RoughVE}. 

Unfortunately, by virtue of the deterministic change of measure, the rBergomi model produces flat smiles for the VIX index. This feature is very inconsistent with the market, where the VIX smile is upward slopping.  In order to circumvent this problem, multiple solutions have been proposed. One possibility is to propose a stochastic volatility of the TBSS (which acts as a stochastic vol-of-vol). This approach was first proposed in \cite{ModulatedVolterra}, where analytic tractability is ensured by assuming the stochastic vol-of-vol is Markovian and independent of the volatility, together with some extra assumptions. This approach is further explored in \cite{LSMC}, where a Least Squares Monte Carlo method is proposed to keep numerical tractability whilst dropping the independence assumption. For a discussion about capturing the VIX skew, see \cite{Alos18}.

One of the challenges of proposing more complex rough volatility models is to preserve analytic tractability. Due to non-Markovianity, classical techniques involving PDE's are not available, and costly Monte Carlo simulations are the only viable alternative. Recently, there has been an attempt to solve this problem by considering rough volatility models of the affine type, from which the most well known is the rough Heston model proposed in \cite{rHeston}. Remarkably, the classical Riccati equation appearing in the characteristic function of the Heston model is replaced by a fractional Riccati equation in the rough Heston model. For details, see \cite{rHeston-Char}. We may also name the \fOU process, where the diffusion term is non-random. The original RFSV model of \cite{VolIsRough} consisted of modeling the log-volatility as a process of the \fOU type. It has also been considered in \cite{Wang-fOU}. 

The general theory of affine Volterra processes is extensively studied in \cite{affine}, where it is shown that the conditional moment generating function (cMGF) can be written in terms of a Riccati-Volterra equation, thereby extending previous results concerning classical affine diffusions (see, for instance, \cite{Filipovi2005TimeinhomogeneousAP}).  
The body of literature concerning affine processes, namely in finance, is vast. See, per example,
\cite{Jaber2020TheCF}, \cite{Jaber2019WeakEA}, \cite{Jaber2019AWS}, \cite{Comte2012AffineFS}, \cite{Wang2008ExistenceAU}, \cite{InhomVolterra}
 and references therein.

In this paper, we propose a \fOU stochastic change of measure for the rBergomi model that produces upward slopping VIX smiles whilst maintaining analytic tractability. The change of measure is obtained by explicitly solving the corresponding fractional affine Volterra equation. This approach has the advantage of giving us a description of market dynamics both under the physical measure $\mathbb{P}$ and the pricing measure $\mathbb{Q}$. 

The paper is organized as follows. In \cref{sec:rBergomi}, we introduce the (generalized) rBergomi model, both under the physical measure and under a general stochastic change of measure. Afterwards, in \cref{sec:fOU}, we discuss the \fOU process and derive the analytical formulae needed for efficient VIX pricing. Next, in \cref{sec:scm}, we propose the stochastic change of measure for the generalized rBergomi model and obtain a semi-closed formula for the forward variance curve. In \cref{sec:control-variate}, we apply the control variate trick inspired by \cite{ModulatedVolterra} and obtain an approximation for the VIX method which significantly reduces the computing time.
In \cref{sec:var-red}, we further reduce computing times by applying a Monte Carlo variance reduction technique through importance sampling of the continuous time Markov chain. 
Then, in \cref{sec:calib}, we discuss model calibration and display the results. The comparison of the various numerical methods considered throughout the paper can be found in \cref{sec:performance}.  Finally, in \cref{sec:conclusion}, we summarize our conclusions and mention further research problems.

\section{The rBergomi Model} \label{sec:rBergomi}

The rBergomi model was introduced in \cite{PricingRough}.
Motivated by empirical data concerning the log-increments of volatility, the authors propose the following model under the physical measure $\mathbb{P}$:
\begin{equation} \label{eq:rB-P}
v_u = A_0(u) 
\exp\left(
2 \sqrt{\gamma} \int_0^u K(u-s) \, d\Wp_s
\right),
\end{equation}
where $K$ is the fractional kernel
\begin{equation}
K(u-s) = (u-s)^{\alpha-1},
\end{equation}
for $\alpha \in (1/2,1)$, and $A_0$ is a deterministic function. We assume zero interest rates for simplicity. Thus, the dynamics of the price process $S_t$ are
\begin{equation}
dS_t = S_t( \zeta_t dt + \sqrt{v_t} d\Bp_t),
\end{equation}
where $\Bp$ is a sBm correlated with $\Wp$ and defined as
\begin{equation}
\Bp = \rho \Wp + \bar{\rho}\bar{\Wp},
\end{equation}
for $\rho \in (-1,1)$ and $\bar{\rho} = \sqrt{1-\rho^2}$.
In order to price options on a fixed time horizon $T>0$, we need to apply a change of measure to the pricing measure $\mathbb{Q}$. There are essentially two components to the change of measure, applied to the independent Brownian motions $W$ and $\bar{W}$. Thus, a general change of measure for $W$ is characterized by a suitable adapted process $\lambda$ such that
\begin{equation} \label{eq:lambda-dW}
dW_t^\mathbb{P} = dW_t + \lambda_t dt.
\end{equation}
Moreover, the change of measure makes $S$ a $\mathbb{Q}$-martingale so that
\begin{equation}
dB_t = d\Bp_t + \frac{\zeta_t}{\sqrt{v_t}} dt.
\end{equation}
Thus, we obtain the (extended) rBergomi model, where the variance is given by
\begin{equation} \label{eq:rB}
v_u = A_0(u) 
\exp\left(
2 \sqrt{\gamma} \int_0^u K(u-s) \, dW_s
+ 2 \sqrt{\gamma} \int_0^u \lambda_s K(u-s) \, ds
\right),
\end{equation} 
and the price satisfies
\begin{equation}\label{eq:rB-price}
dS_t = S_t \sqrt{v_t} dB_t.
\end{equation}

When $\lambda$ is deterministic, it gets absorbed into the deterministic function $A_0$ and gives rise to the original rBergomi model of \cite{PricingRough}. In this case, the variance is log-normal. This means that the VIX (see \cite{VIX-WhitePaper}), which is given by
\begin{equation} \label{eq:VIX}
VIX_t = \sqrt{ 
\frac{1}{\Delta}
\int_t^{t+\Delta}
\xi_t(u)
\, du,
}
\end{equation}
will be also approximately log-normal and lead to flat VIX smiles.

The goal of this paper is to propose a stochastic change of measure $\lambda$ for the rBergomi model which produces upward slopping VIX smiles and at the same type provides an efficient semi-analytic Monte Carlo method to price VIX options. 

To this end, we will first need to make an excursus to the theory of affine Volterra processes, and in particular the \fOU process.

\section{Fractional \OU process} \label{sec:fOU}

\subsection{Homogeneous case}
The $d$-dimensional \fOU process is the solution to the fractional SDE
\begin{equation} \label{eq:fOU-SDE-hom}
X_t = X_0 + \int_0^t   K(t-s) \theta(\mu - X_s) ds + \int_0^t  K(t-s) \sigma dZ_s,
\end{equation}
where $Z$ is a $d$-dimensional sBm, $\mu, \theta \in \R^d$ and $\sigma, \theta \in \R^{d \times d}$ and $\mu \in \R^d$. It is a particular case of the more general class of affine Volterra processes, discussed in \cite{affine}. Although not Markovian, this class of processes still possesses
an exponential-affine conditional moment generating function, as it can be seen in \cite[Theorem~4.3]{affine}. Moreover, we have an explicit expression for the solution.

\subsection{Inhomogeneous \fOU}

The inhomogeneous \fOU process is defined similarly to the above, but the parameters may depend on time:
\begin{equation} \label{eq:fOU-SDE}
X_t = X_0 + \int_0^t  K(t-s) \theta(s)(\mu(s) - X_s)ds +   \int_0^t  K(t-s) \sigma(s) dZ_s.
\end{equation}
The class of inhomogeneous affine processes were recently studied in \cite{InhomVolterra}. By \cite[Theorem~2.1]{InhomVolterra}, we also have an exponential-affine Laplace transform formula for inhomogeneous affine Volterra models with continuous coefficients, under some mild assumptions.
\subsection{Time dependent mean-reversion}
We are interested in the case when only $\mu$ is time  dependent, since it will allow later to remove the log-normality of the variance whilst keeping analytical tractability. Although $\mu$ is neither homogeneous nor continuous, we may still use \cite[Lemma~2.5]{affine} to build an explicit solution. 

In order to ensure $K \star dZ$ has a continuous version, we assume the kernel satisfies \cite[condition~(2.5)]{affine}.

\begin{assumption}\label{ass:K}
Assume $K\in L^2_{loc}(\R_+,\R)$ and there is $\gamma\in(0,2]$ such that $\int_0^h K(t)^2dt = O(h^\gamma)$
and $\int_0^T (K(t+h)-K(t))^2 dt = O(h^\gamma)$ for every $T<\infty$.
\end{assumption}

We also introduce general assumptions on the function $\mu$, that in particular allow for piecewise constant functions.

\begin{assumption}\label{ass:b}
Assume $\mu \in L^q_{loc}$, where $q$ is such that $1/p+1/q=1$, and $K \in L^p_{loc}$. 
\end{assumption}
\begin{rmk}
The fractional kernel satisfies \ref{ass:K}. By \cref{ass:K} we know $K \in L^2_{loc}$, but it may be that $K \in L^p_{loc}$ for $p > 2$, which will yield a weaker condition on $\mu$ (since $q$ will be required to be lower). In the case of the fractional kernel, we know $K \in L^p_{loc}$ for any $p<1/(1-\alpha)$ and in particular for $p=2$ (since $\alpha>1/2$). Thus, at worst, choosing $q=2$ always works.
\end{rmk}

Thus, we have the following theorem.

\begin{thm} \label{thm:fsde-sol}
Suppose $K$ satisfies \cref{ass:K} and $\mu$ satisfies \cref{ass:b}. Let $T>0$. Let $\theta, \sigma \in \R^d$. Let $F:[0,T] \to \R^d$ be a continuous function and $Z$ be a sBm. Denote also by $R_\theta$ the resolvent of second kind (see \cref{dfn:resolvent}) of $K\theta$ and define $E_\theta = K-R_\theta \star K$. Define also, for $0 < c < d$, the deterministic function
\begin{equation}
H_{c,d}(u) = \int_c^d E_\theta(u-s) \theta \mu(s) ds
\end{equation}
and the Volterra process
\begin{equation}
Y_{c,d}^\sigma(u) = \int_c^d E_\theta(u-s) \sigma dZ_s.
\end{equation}

Then the affine Volterra equation
\begin{equation} \label{eq:thm-SDE}
X_t = F(t) + \int_0^t  K(t-s) \theta(\mu(s) - X_s)ds +   \int_0^t  K(t-s) \sigma dZ_s
\end{equation}
has a unique continuous strong solution on $[0,T]$. Moreover, the solution is given explicitly by
\begin{equation}
\label{eq:sol-time-dep}
X_u = g(u) + H_u + Y_u^\sigma ,
\end{equation}
where 
\begin{equation} \label{eq:sol-g}
g(u) = F(u) - \int_0^u R_\theta(u-s)F(s) ds,
\end{equation}
$H_u := H_{0,u}(u)$ is a deterministic function and $Y_u^\sigma := Y_{0,u}^\sigma(u)$ is a (fractional) Gaussian Volterra  process. 
\end{thm}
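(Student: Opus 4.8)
The plan is to read \eqref{eq:thm-SDE} as a linear Volterra equation for $X$ and to invert it by means of the resolvent of the second kind. First I would gather every term not involving $X$ into a single (random) forcing term
\begin{equation*}
\Phi_t = F(t) + \int_0^t K(t-s)\theta\mu(s)\,ds + \int_0^t K(t-s)\sigma\,dZ_s,
\end{equation*}
so that \eqref{eq:thm-SDE} becomes $X + (K\theta)\star X = \Phi$. Crucially, the factor multiplying $X$ inside the convolution is the \emph{constant} $\theta$, so the relevant kernel $K\theta$ is fixed and the only place the non-smooth $\mu$ appears is in $\Phi$; this is what lets us bypass the continuous-coefficient hypotheses of results such as \cite[Theorem~2.1]{InhomVolterra}. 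Since $K\in L^2_{loc}\subset L^1_{loc}$ by \cref{ass:K}, the kernel $K\theta$ admits a resolvent of the second kind $R_\theta$ (\cref{dfn:resolvent}), characterised by $R_\theta = K\theta - (K\theta)\star R_\theta = K\theta - R_\theta\star(K\theta)$.

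Next I would write down and verify the explicit solution. The resolvent inverts the operator $Y\mapsto Y + (K\theta)\star Y$, so the unique candidate is
\begin{equation*}
X = \Phi - R_\theta\star\Phi;
\end{equation*}
indeed, using $(K\theta)\star R_\theta = K\theta - R_\theta$ one checks directly that $X + (K\theta)\star X = \Phi$. It then remains to expand $\Phi$ into its three constituents and simplify each through associativity of convolution and the definition $E_\theta = K - R_\theta\star K$. The $F$-part gives $F - R_\theta\star F = g$ of \eqref{eq:sol-g}; the drift part gives $(K - R_\theta\star K)\star\theta\mu = E_\theta\star\theta\mu$, which at time $u$ equals $H_u = H_{0,u}(u)$; and the martingale part gives $E_\theta\star\sigma\,dZ$, which at time $u$ equals $Y_u^\sigma = Y_{0,u}^\sigma(u)$. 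Adding the three pieces yields \eqref{eq:sol-time-dep}. The single nontrivial commutation here, moving $R_\theta\star(\cdot)$ through the stochastic integral, I would justify by the stochastic Fubini theorem, which applies once $R_\theta,K$ are seen to lie in the appropriate $L^p_{loc}$ spaces.

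For the surrounding analytic bookkeeping I would lean on \cite[Lemma~2.5]{affine}, taking care to verify its hypotheses when $\mu$ is only $L^q_{loc}$. The resolvent $R_\theta$, and therefore $E_\theta$, inherits the local integrability of $K$, so $E_\theta\in L^p_{loc}$; combined with $\mu\in L^q_{loc}$ and $1/p+1/q=1$ from \cref{ass:b}, Hölder's inequality makes $H_u=\int_0^u E_\theta(u-s)\theta\mu(s)\,ds$ finite, while $g$ is continuous because $F$ is continuous and $R_\theta\star F$ is a convolution of an $L^1_{loc}$ kernel with a bounded function, and $Y^\sigma$ admits a continuous version because $E_\theta$ still meets the regularity imposed by \cref{ass:K}. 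The step I expect to be the main obstacle is establishing continuity in $u$ of the deterministic term $H_u$: since $E_\theta$ carries the integrable singularity of the fractional kernel at the origin and $\mu$ is merely integrable, continuity is not immediate and has to be extracted by a dominated-convergence argument that exploits precisely the conjugate-exponent margin of \cref{ass:b}. This is the one place where the discontinuity of $\mu$ genuinely has to be controlled.
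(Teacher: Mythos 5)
Your proposal is correct and takes essentially the same route as the paper: both invert the linear Volterra equation via the resolvent of the second kind $R_\theta$, arrive at the decomposition $g + H + Y^\sigma$ through the identity $E_\theta = K - R_\theta\star K$, and reduce the continuity of the deterministic part to the conjugate-exponent convolution lemma of \cref{ass:b}. The only cosmetic difference is bookkeeping: the paper absorbs just the drift into a modified forcing term $\tilde F = F + K\star(\theta\mu)$ and invokes \cite[Lemma~2.5]{affine} (which packages the stochastic Fubini commutation and the uniqueness you verify by hand), while you fold the stochastic convolution into the forcing as well and check the resolvent identity directly.
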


\begin{proof}
Let $b=\theta \mu$. Consider the function
\begin{equation}
\tilde{F}(t) = F(t) + (K \star b)(t).
\end{equation}
Since the deterministic function $b$ satisfies $b \in L^q_{loc}$ and $K \in L^p_{loc}$, it follows by \cref{lem:cts-conv} that $K \star b$ is continuous. Since $F$ is assumed to be continuous, it follows that $\tilde{F}$ is continuous. Note that \eqref{eq:thm-SDE} can be written as
\begin{equation}
X = \tilde{F} + (-K\theta) \star X + K\star(\sigma dZ).
\end{equation}
Provided $X$ is continuous, by \cite[Lemma~2.5]{affine}, $X$ solves the above if and only if
\begin{equation}
X =  \tilde{F}- R_\theta \star \tilde{F} + E_\theta \star (\sigma dZ).
\end{equation}
Now notice that
\begin{align*}
 \tilde{F}- R_\theta \star \tilde{F} + E_\theta \star (\sigma dZ)
 &= F - R_\theta \star F + (K \star b) - R_\theta \star (K \star b) + E_\theta \star(\sigma dZ) \\
&= g + (K - R_\theta \star K) \star b + E_\theta \star (\sigma dZ) \\
&= g + E_\theta \star b +  E_\theta \star (\sigma dZ),
\end{align*}
where we used the associativity of the convolution operator for deterministic functions in the second equality and the definition of $E_\theta$ in the third equality. Thus, we only have to check that $X$ admits a continuous version. Indeed, by the properties of the resolvent (see \cref{rmk:Gripen}), since $K \in L^2_{loc}$, then also $R_\theta \in L^2_{loc}$. 
Then, since $K$ satisfies \cref{ass:K}, by \cite[Example 2.3.(v)]{affine} it follows that $E_\theta \star K$ also satisfies \cref{ass:K}. The fact that $E_\theta$ satisfies \cref{ass:K} is now a consequence of \cite[Example 2.3.(iii)]{affine}. Thus, by \cite[Lemma~2.4]{affine}, we conclude that $E_\theta \star (\sigma dZ)$ admits a continuous version. Finally, since $\tilde{F}$ is continuous and $R_\theta \in L^p_{loc} \subset L^1_{loc}$, by \cref{lem:cts-conv} we also conclude that $\tilde{F} \star R_\theta$ is continuous.

\end{proof}
We now turn to one dimensional case $d=1$. Using the above theorem it is easy to obtain the exponentially affine formula for the cMGF.
\begin{prop}
Let $w \in \R$ and $X, Y^\sigma, g$ be as in \eqref{eq:sol-time-dep}, with $d=1$. To lighten notation, write $Y = Y^{\sigma=1}$. Then
\begin{equation}
\label{eq:affine-lapace-time-dep}
\ev{ \exp\left(wX_u\right) \mid \F_t} = \exp \left[
w (g(u) + H_u + \sigma Y_{0,t}(u) ) + w^2 \sigma^2e_t(u)
\right]
\end{equation}
where
\begin{equation} \label{eq:def-et}
e_t(u, \sigma) = \frac{1}{2} 
\sigma^2 \int_t^u E_\theta^2(u-s)\, ds.
\end{equation}
\end{prop}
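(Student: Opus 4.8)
The plan is to exploit the explicit representation \eqref{eq:sol-time-dep} together with the Gaussian nature of Wiener integrals against deterministic kernels. In the scalar case the solution reads $X_u = g(u) + H_u + \sigma Y_u$, with $Y_u = \int_0^u E_\theta(u-s)\,dZ_s$, and since $g(u)$ and $H_u$ are deterministic they factor out of the conditional expectation immediately; all the randomness sits in the Wiener integral $\sigma Y_u$.

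The key step is to split this integral at the conditioning time $t$, writing $Y_u = \int_0^t E_\theta(u-s)\,dZ_s + \int_t^u E_\theta(u-s)\,dZ_s = Y_{0,t}(u) + Y_{t,u}(u)$. I would emphasize the subtlety that $Y_{0,t}(u)$ is \emph{not} $Y$ evaluated at $t$: the kernel stays anchored at the future time $u$, so although $Y_{0,t}(u)$ is $\F_t$-measurable, it continues to depend on $u$. The first term is therefore known given $\F_t$, while the second term $Y_{t,u}(u)$ is built from increments of $Z$ on $(t,u]$ and is hence independent of $\F_t$.

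Next I would identify the conditional law of $X_u$. Because $Y_{t,u}(u)$ is a Wiener integral of a deterministic kernel, it is centered Gaussian, and by the Itô isometry its variance is $\int_t^u E_\theta^2(u-s)\,ds$, which is finite since $E_\theta \in L^2_{loc}$ (established in the proof of \cref{thm:fsde-sol}). Consequently, conditionally on $\F_t$, the variable $X_u$ is Gaussian with mean $g(u) + H_u + \sigma Y_{0,t}(u)$ and variance $\sigma^2 \int_t^u E_\theta^2(u-s)\,ds$. Feeding these into the moment generating function $\ev{e^{wN}} = \exp(wm + \tfrac{1}{2}w^2 V)$ of an $\normal{m, V}$ variable yields \eqref{eq:affine-lapace-time-dep}, the quadratic term reproducing $e_t$ as defined in \eqref{eq:def-et}.

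I do not expect a genuine obstacle here, as the computation reduces to the conditional-Gaussian moment generating function. The only point demanding care is the independence and the law of the future part $Y_{t,u}(u)$: this rests on the independent-increment property of $Z$, on the fact that a Wiener integral with deterministic integrand is Gaussian, and on the $\F_t$-measurability of the past integral. A fully rigorous argument would invoke the independence of the $\sigma$-algebras generated by the disjoint time intervals $[0,t]$ and $(t,u]$, together with the isometry to pin down the conditional variance.
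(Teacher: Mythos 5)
Your proposal is correct and follows essentially the same route as the paper: factor out the deterministic terms $g(u)+H_u$, split $Y_u = Y_{0,t}(u) + Y_{t,u}(u)$ at the conditioning time, use the $\F_t$-measurability of the past part and the independence plus Gaussianity (via the It\^o isometry) of the future part, and conclude with the Gaussian moment generating function. The only difference is cosmetic — you phrase it as identifying the conditional law of $X_u$ rather than manipulating the conditional expectation line by line.
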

\begin{proof}
It is clear that
\begin{equation}
\ev{ \exp(wX_u) \mid \F_t}  = \exp\left[w (g(u) + H_u) \right] \ev{ \exp(wY_u) \mid \F_t}
.
\end{equation}
Then
\begin{align*}
\ev{ \exp(wY_u) \mid \F_t}&=  \ev{
\exp(wY_{0,t}(u) + wY_{t, u}(u) \mid \F_t 
} \\
&=\exp(wY_{0,t}(u)) \ev{ \exp(wY_{t,u}(u) ) \mid \F_t} \\
&=\exp(wY_{0,t}(u))  \ev{
\exp(w Y_{t,u}(u) )
} \\
&= \exp\left(
wY_{0,t}(u) +  
\frac{1}{2} w^2 \sigma^2 \int_t^u E_\theta^2(u-s)\, ds
\right),\end{align*}
where we used the fact that $Y_{0,t}(u)$ is $\F_t$-measurable and the fact that $\sigma Y_{t,u}(u)$ is Gaussian and independent of $\F_t$ with zero mean, and variance given by $2e_t(u, \sigma)$.

\end{proof}
It is actually possible to express $R_\theta$ and $E_\theta$ in terms of the Mittag-Leffler function, as done in \cref{lem:eB-exact}. Moreover, if we assume $\mu$ is piecewise constant and $F$ is constant, we get explicit expressions for $H$ and $g$.
\begin{prop}\label{prop:Hg-exact}
Let $a < b$. Suppose $\mu$ is piecewise constant so that it can be written as
\begin{equation}
\mu(s) = \sum_{k=0}^n \mu_k \ind{[t_k, t_{k+1} )}(s),
\end{equation}
where $a = t_0 < t_1 < ... < t_n < t_{n+1} = b$.
Then
\begin{equation} \label{eq:h-decomp}
\begin{split}
H_{a, b} (u) &= \sum_{k=0}^n \mu_k I_k, \\
I_k &:= \int_{t_k}^{t_{k+1}} \theta E_\theta(u-s) \, ds.
\end{split}.
\end{equation}
Moreover,
\begin{equation}
I_k = 
E_{\alpha, 1}(-\theta\Gamma(\alpha)(u-t_{k+1})^\alpha) -  E_{\alpha, 1}(-\theta\Gamma(\alpha)(u-t_k)^\alpha) 
.
\end{equation}
Also, if $F \equiv x_0 \in \R$ we have 
\begin{equation}
g(u) = x_0\left( 1 - E_{\alpha, 1}(-\theta \Gamma(\alpha)u^\alpha \right).
\end{equation}
\end{prop}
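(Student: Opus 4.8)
The three identities all follow from a single analytic fact that I would isolate first. By \cref{lem:eB-exact} the resolvent and the kernel $E_\theta$ have the explicit Mittag-Leffler forms $R_\theta(x)=\theta\Gamma(\alpha)x^{\alpha-1}E_{\alpha,\alpha}(-\theta\Gamma(\alpha)x^\alpha)$ and $E_\theta(x)=\Gamma(\alpha)x^{\alpha-1}E_{\alpha,\alpha}(-\theta\Gamma(\alpha)x^\alpha)$, so in particular $R_\theta=\theta E_\theta$. The key observation is that $x\mapsto E_{\alpha,1}(-\theta\Gamma(\alpha)x^\alpha)$ is a primitive of $-\theta E_\theta=-R_\theta$. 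I would prove this by differentiating the entire series $E_{\alpha,1}(z)=\sum_{n\ge 0}z^n/\Gamma(\alpha n+1)$ term by term (legitimate since it has infinite radius of convergence), applying the chain rule, using $\Gamma(\alpha n+1)=\alpha n\,\Gamma(\alpha n)$, and reindexing $n\mapsto n+1$; this collapses the derivative to exactly $-\theta\Gamma(\alpha)x^{\alpha-1}E_{\alpha,\alpha}(-\theta\Gamma(\alpha)x^\alpha)=-R_\theta(x)$.

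With this primitive identity in hand the decomposition of $H_{a,b}$ is immediate. Substituting the piecewise-constant representation of $\mu$ into $H_{a,b}(u)=\int_a^b E_\theta(u-s)\theta\mu(s)\,ds$ and using linearity of the integral splits the integral over $[a,b]$ into the sum over the cells $[t_k,t_{k+1})$, which is precisely $\sum_k \mu_k I_k$ with $I_k=\int_{t_k}^{t_{k+1}}\theta E_\theta(u-s)\,ds$. This step is purely formal; it needs only that each $I_k$ is finite, which holds because $E_\theta$ is locally integrable (the singularity $x^{\alpha-1}$ at the origin is integrable for $\alpha\in(1/2,1)$).

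For the closed form of $I_k$ I would recognise the integrand as a total derivative: the primitive identity gives $\tfrac{d}{ds}E_{\alpha,1}(-\theta\Gamma(\alpha)(u-s)^\alpha)=\theta E_\theta(u-s)$, so the fundamental theorem of calculus yields $I_k=E_{\alpha,1}(-\theta\Gamma(\alpha)(u-t_{k+1})^\alpha)-E_{\alpha,1}(-\theta\Gamma(\alpha)(u-t_k)^\alpha)$ upon evaluating at the endpoints. The computation of $g$ uses the same mechanism: for $F\equiv x_0$ one has $g(u)=x_0-x_0\int_0^u R_\theta(u-s)\,ds=x_0\bigl(1-\int_0^u R_\theta(r)\,dr\bigr)$, and since $R_\theta=\theta E_\theta$ the primitive identity together with $E_{\alpha,1}(0)=1$ gives $\int_0^u R_\theta(r)\,dr=1-E_{\alpha,1}(-\theta\Gamma(\alpha)u^\alpha)$. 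Substituting this back produces the stated Mittag-Leffler form for $g$; the identity $g(0)=F(0)=x_0$ is a useful consistency check, since it pins down the constant of integration through $E_{\alpha,1}(0)=1$.

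The only genuine obstacle is the primitive identity of the first paragraph; everything afterwards is bookkeeping. The delicate points there are the justification of term-by-term differentiation of the Mittag-Leffler series and the behaviour at the origin, where $R_\theta$ and $E_\theta$ carry an integrable singularity. Because $E_{\alpha,1}(-\theta\Gamma(\alpha)x^\alpha)$ is continuous up to $x=0$ and equals the integral of its locally integrable derivative, the fundamental theorem of calculus still applies across the singular endpoint, which is exactly what licenses evaluating $I_k$ and the integral defining $g$ at $r=0$. Once \cref{lem:eB-exact} is available, I expect the argument to be short.
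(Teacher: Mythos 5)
Your argument follows essentially the same route as the paper: the paper packages your ``primitive identity'' as \cref{lem:eB-integral} (proved by integrating the Mittag-Leffler series term by term rather than differentiating it, which is the same computation read in the other direction), and then splits $H_{a,b}$ over the cells and evaluates each $I_k$ and $g$ from that lemma exactly as you do.

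One point needs attention, and your own consistency check already exposes it. Your computation gives $\int_0^u R_\theta(r)\,dr = 1 - E_{\alpha,1}(-\theta\Gamma(\alpha)u^\alpha)$, hence $g(u) = x_0\bigl(1-\int_0^u R_\theta(r)\,dr\bigr) = x_0\,E_{\alpha,1}(-\theta\Gamma(\alpha)u^\alpha)$, \emph{not} the displayed $x_0\bigl(1-E_{\alpha,1}(-\theta\Gamma(\alpha)u^\alpha)\bigr)$. Only the former satisfies $g(0)=F(0)=x_0$, which is precisely the check you invoke (the displayed formula gives $g(0)=0$). The statement's formula for $g$ therefore appears to contain a typo, and your assertion that substituting back ``produces the stated Mittag-Leffler form'' is not accurate as written: you should either record the corrected formula $g(u)=x_0\,E_{\alpha,1}(-\theta\Gamma(\alpha)u^\alpha)$ or flag the discrepancy explicitly rather than claim agreement. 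Everything else --- the cell decomposition of $H_{a,b}$, the closed form of $I_k$ via the fundamental theorem of calculus, and the handling of the integrable singularity at the endpoint --- is correct and consistent with the paper's intended argument.
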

\begin{proof}

Both facts an easy consequence of the elementary integral equalities involving $R_\theta$ and $E_\theta$, stated and proved in \cref{lem:eB-exact} and \cref{lem:eB-integral}.

\end{proof}

\section{Change of measure via regime switching fractional Ornstein-Uhlenbeck} \label{sec:scm}

We now turn to the case where $\mu$ follows a continuous time Markov chain (CTMC) with $m$ states, independent of the driving Brownian motions.
\begin{prop} \label{prop:laplace-fOU-RS}
Let $\theta$ and $\sigma$ be constant and $\mu$ be independent of $Z$, Markovian and such that it satisfies \cref{ass:b} almost surely. Let $0 < t \leq u \leq T=t+\Delta$. Let $F$ be a continuous function
and $g$ be as in \eqref{eq:sol-g}. For a fixed function $\mu$, let $X^{\{\mu(s) \mid 0 \leq s \leq T\}}$ be as in \eqref{eq:sol-time-dep}.  Now define $X$ by
\begin{equation}
X_u(\omega) = X_u^{ \{\mu_s(\omega) \mid 0 \leq s \leq T \}}(\omega).
\end{equation}
Then
\begin{equation}
\ev{ \exp (wX_u) \mid \F_t} = \exp\left( 
w (g(u) + H_{0,u}(u) + \sigma Y_{0,t}(u) )
+ w^2 e_t(u, \sigma) 
\right)G(w, u-t, \mu_t) ,
\end{equation}
where $G$ is the deterministic function
\begin{equation} \label{eq:def-G}
G(w,\tau,z) = \ev{ \exp \left( w\int_0^\tau \theta E_\theta(\tau-s) \mu(s) \,ds \right) \Big| \mu_0 = z }.
\end{equation}
\end{prop}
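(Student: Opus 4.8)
The plan is to reduce the random-$\mu$ statement to the deterministic-$\mu$ formula \eqref{eq:affine-lapace-time-dep} by conditioning on the entire trajectory of the chain and then integrating it out with the tower property. Write $\mathcal{G} = \sigma(\mu_s : 0 \le s \le T)$ for the $\sigma$-algebra generated by the full path of $\mu$, and recall that $\F_t$ carries the information of both the Brownian motions and the chain up to time $t$, while $\mu$ is independent of $Z$. I would start from
\[
\ev{\exp(wX_u) \mid \F_t} = \ev{\ev{\exp(wX_u) \mid \F_t \vee \mathcal{G}} \mid \F_t},
\]
so that the first task is to evaluate the inner conditional expectation, where the chain has been frozen.

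Conditionally on $\mathcal{G}$, the chain is a fixed function $s \mapsto \mu_s(\omega)$ which is piecewise constant and hence satisfies \cref{ass:b} for almost every $\omega$; on that full-measure set $X$ coincides with the deterministic-$\mu$ solution $X^{\{\mu_s(\omega)\}}$ of \eqref{eq:sol-time-dep}, to which the cMGF computation \eqref{eq:affine-lapace-time-dep} applies. The only point requiring care is that the inner conditioning is on $\F_t \vee \mathcal{G}$ rather than on $\F_t$ alone: since $\mu$ is independent of $Z$, the increment $\sigma Y_{t,u}(u)$ is still Gaussian with variance $2 e_t(u,\sigma)$ and independent of $\F_t \vee \mathcal{G}$, so enlarging the conditioning $\sigma$-algebra by the path of $\mu$ does not disturb the Gaussian step in the proof of \eqref{eq:affine-lapace-time-dep}. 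This gives
\[
\ev{\exp(wX_u) \mid \F_t \vee \mathcal{G}} = \exp\Big[w\big(g(u) + H_{0,u}(u) + \sigma Y_{0,t}(u)\big) + w^2 e_t(u,\sigma)\Big],
\]
where now $H_{0,u}(u) = \int_0^u \theta E_\theta(u-s)\mu_s\,ds$ is built from the random trajectory.

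Next I would split the integral at time $t$, writing $H_{0,u}(u) = H_{0,t}(u) + H_{t,u}(u)$. The factors $g(u)$, $\sigma Y_{0,t}(u)$, $e_t(u,\sigma)$ and $H_{0,t}(u)$ are all $\F_t$-measurable, so applying $\ev{\cdot \mid \F_t}$ pulls them out and leaves $\ev{\exp(w H_{t,u}(u)) \mid \F_t}$ to be computed. Because $H_{t,u}(u)$ depends only on $\{\mu_s : t \le s \le u\}$ and $\mu$ is independent of the Brownian motions, conditioning on $\F_t$ collapses to conditioning on $\mu_t$, and the Markov property yields $\ev{\exp(w H_{t,u}(u)) \mid \F_t} = \ev{\exp(w H_{t,u}(u)) \mid \mu_t}$. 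A change of variables $s \mapsto s+t$ together with the time-homogeneity of the chain then identifies this with $G(w, u-t, \mu_t)$ as defined in \eqref{eq:def-G}, producing the claimed formula; note that the $H$-term surviving in the leading exponential is its $\F_t$-measurable part $H_{0,t}(u)$, the complementary piece $H_{t,u}(u)$ having been absorbed into $G$.

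The main obstacle I anticipate is measure-theoretic bookkeeping rather than any hard estimate: one must carefully justify the nested conditioning on $\F_t \vee \mathcal{G}$ and verify that the independence of $\mu$ and $Z$ legitimately (i) preserves the Gaussian variance term when the conditioning $\sigma$-algebra is enlarged by the full path of $\mu$, and (ii) reduces conditioning on the whole past $\F_t$ to the single state $\mu_t$ through the Markov property. One should also confirm that \eqref{eq:affine-lapace-time-dep} is being invoked pathwise on the full-measure event where $\mu(\omega)$ satisfies \cref{ass:b}, which holds almost surely since CTMC paths are piecewise constant, exactly as assumed in the hypotheses.
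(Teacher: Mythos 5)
Your proposal follows essentially the same route as the paper's proof: tower property conditioning on the full path $\sigma$-algebra $\mathcal{G}_T$ of the chain, application of the deterministic-$\mu$ formula \eqref{eq:affine-lapace-time-dep} thanks to independence of $\mu$ and $Z$, splitting $H_{0,u}(u)$ at time $t$ to extract the $\F_t$-measurable factors, and reducing $\ev{\exp(wH_{t,u}(u)) \mid \F_t}$ to $G(w,u-t,\mu_t)$ via the Markov property and a change of variables. Your added care about why enlarging the conditioning to $\F_t \vee \mathcal{G}_T$ does not disturb the Gaussian step is a welcome refinement of the same argument.
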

\begin{proof}
By the tower property we have
\begin{equation}
\ev{ \exp (wX_u) \mid \F_t} = \ev{ 
\ev{ \exp (wX_u) \mid \F_t \lor \mathcal{G}_T } 
\mid \F_t
},
\end{equation}
where $\mathcal{G}_T$ is the sigma algebra generated by  $(\mu_s)_{0 \leq s \leq T}$.
Since $\mu$ is independent of $Z$, we may apply \eqref{eq:affine-lapace-time-dep}:
\begin{equation}
\ev{ \exp (wX_u) \mid \F_t} = 
\ev{ \exp \left( w (g(u) + H_{0,u}(u) + \sigma Y_{0,t}(u) )+  w^2 e_t(u, \sigma) \right) \mid \F_t}.
\end{equation}
Note that $H_u = H_{0,u}(u)$ is no longer deterministic. By taking out the measurable terms and rearranging
\begin{equation}
\exp\left( w (g(u)+ H_{0,t}(u)+ \sigma Y_{0,t}(u))+ w^2e_t(u, \sigma) \right) \ev{
\exp(wH_{t,u}(u)) \mid \F_t}.
\end{equation}
Finally, the fact that $\mu$ is a CTMC, together with a change of variables implies that
\begin{equation}
\ev{
\exp(wH_{t,u}(u)) \mid \F_t
} = \ev{
\exp(wH_{t,u}(u)) \mid \mu_t
} = G(w,u-t,\mu_t).
\end{equation}
\end{proof}
We may now introduce a stochastic change of measure based on the \fOU process.

\begin{dfn} \label{dfn:smc-fOU}
Consider the rBergomi model as in \cref{sec:rBergomi}, where the variance is given by \eqref{eq:rB}. Introduce the Brownian motions $Z^\mathbb{P}, \bar{Z}^\mathbb{P}$, independent of each other and of $\bar{W}^\mathbb{P}$. Let $\eta \in (-1,1)$, with $\eta'=\sqrt{1-\eta^2}$ and write $\Wp$ as
$$
\Wp_s = \eta Z^\mathbb{P}_s + \bar{\eta} \bar{Z}^\mathbb{P}_s.
$$
Let then $\mu$ follow a CTMC, independent of all the Brownian motions. Let $X$ be as in \cref{prop:laplace-fOU-RS}, where $\sigma=\eta$ and the driving sBm is $Z$. We define the \fOU regime switching change of measure by
\begin{equation} \label{eq:lambda-fOU}
\lambda_s(\omega) = \theta\ (\mu_s(\omega) - X_s(\omega)).
\end{equation}
\end{dfn} 

\begin{rmk}
The change of measure $\lambda$ is no longer deterministic, which means it is not obvious that $W$ is a $\mathbb{Q}$ Brownian motion. To this end, we apply Girsanov's theorem.  For details, we refer to \cref{sec:girsanov}.
\end{rmk}

\begin{rmk}
Note that, since $v$ follows \eqref{eq:rB} and $X$ solves \eqref{eq:thm-SDE}, we may write
\begin{equation}\label{eq:fOU-model}
\log( v_u/A_0(u) ) = 2\sqrt{\gamma} \left(
X_u -F(u) + \bar{\eta} M_u
\right),
\end{equation}
where $M$ is the Riemann-Liouville fBm 
$$
M_u = (K \star d\bar{Z} )(u)= \int_0^uK(u-s) d\bar{Z}_s.
$$
Now it is easy to derive a semi-closed expression for the forward variance curve, which can be used to obtain  the VIX via a standard quadrature method.
\end{rmk}
\begin{prop}
Suppose the assumptions of 
\cref{prop:laplace-fOU-RS} are verified. Let X and $\mu$ be as in \cref{dfn:smc-fOU} with $w = 2\sqrt{\gamma}$. Then we have the following formula for the forward variance curve
\begin{equation}
\label{eq:fvc-formula}
\begin{split}
\xi_t(u) &= \xi_0(u)\frac{
	 G( w,u-t,\mu_t ) 
	}{
	G(w,u,\mu_0)
	}
\exp\left[ 
w \Lambda(0,t,u)
+ w^2 \lambda(0,t,u)
\right],
\end{split}
\end{equation}
where
\begin{equation}
\Lambda(0,t,u) = H_{0,t}(u) + \eta Y_{0,t}(u) 
+ \bar{\eta} M_{0,t}(u),
\end{equation}
\begin{equation}
\lambda(0, t, u) = e_t(u, \eta) -e_0(u, \eta) + m_t(u, \eta) - m_0(u,\eta),
\end{equation}
\begin{equation}
M_{a, b}(u) = \int_a^b K(u-s) \, d\bar{Z}_s,
\end{equation}
and
\begin{equation}
\label{eq:et}
m_t(u, \eta) = \log\ev{ \exp ( \bar{\eta} M_{t, u}(u)) \mid \F_t} =  \frac{1}{2}(1-\eta^2)\frac{(u-t)^{2H}}{2H}.
\end{equation}
\end{prop}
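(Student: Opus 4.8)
The plan is to compute $\xi_t(u) = \ev{v_u \mid \F_t}$ directly from the explicit representation \eqref{eq:fOU-model}. Setting $w = 2\sqrt{\gamma}$, that identity gives $v_u = A_0(u)\exp\big(w(X_u - F(u) + \bar{\eta}M_u)\big)$, so that
\begin{equation}
\xi_t(u) = A_0(u)\,e^{-wF(u)}\,\ev{ \exp\big(wX_u + w\bar{\eta}M_u\big) \mid \F_t }.
\end{equation}
The whole task reduces to evaluating this conditional Laplace transform, which differs from \cref{prop:laplace-fOU-RS} only by the extra term $w\bar{\eta}M_u$ coming from the independent driver $\bar{Z}$.

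First I would split each of the three stochastic ingredients into a past and a future part across the conditioning time $t$: write $H_u = H_{0,t}(u) + H_{t,u}(u)$, $Y_{0,u}(u) = Y_{0,t}(u) + Y_{t,u}(u)$ and $M_u = M_{0,t}(u) + M_{t,u}(u)$. The terms $g(u)$, $H_{0,t}(u)$, $\eta Y_{0,t}(u)$ and $\bar{\eta}M_{0,t}(u)$ are $\F_t$-measurable and pull out of the expectation; their sum (apart from $g$) is exactly $\Lambda(0,t,u)$. For the remaining future part I would invoke the mutual independence of $\mu$, $Z$ and $\bar{Z}$ guaranteed by \cref{dfn:smc-fOU} to factor the conditional expectation into three pieces. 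The CTMC piece $\ev{\exp(wH_{t,u}(u))\mid\mu_t}$ equals $G(w,u-t,\mu_t)$ by the same change-of-variables argument used in the proof of \cref{prop:laplace-fOU-RS}. The $Z$-driven piece is a centred Gaussian integral with $\mathrm{var}(\eta Y_{t,u}(u)) = \eta^2\int_t^u E_\theta^2(u-s)\,ds$, contributing $\exp(w^2 e_t(u,\eta))$. The $\bar{Z}$-driven piece is a centred Gaussian integral with $\mathrm{var}(\bar{\eta}M_{t,u}(u)) = \bar{\eta}^2\int_t^u K(u-s)^2\,ds = (1-\eta^2)(u-t)^{2H}/(2H)$, contributing $\exp(w^2 m_t(u,\eta))$ by \eqref{eq:et}.

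Collecting these factors yields
\begin{equation}
\xi_t(u) = A_0(u)\,e^{-wF(u)}\,e^{wg(u)}\,G(w,u-t,\mu_t)\,\exp\big(w\Lambda(0,t,u) + w^2(e_t(u,\eta)+m_t(u,\eta))\big).
\end{equation}
The final step is a normalisation trick removing the unknown deterministic prefactor $A_0(u)e^{-wF(u)}e^{wg(u)}$. Evaluating the same expression at $t=0$, where $\Lambda(0,0,u)=0$ because the past integrals vanish, gives
\begin{equation}
\xi_0(u) = A_0(u)\,e^{-wF(u)}\,e^{wg(u)}\,G(w,u,\mu_0)\,\exp\big(w^2(e_0(u,\eta)+m_0(u,\eta))\big).
\end{equation}
Solving this for the prefactor and substituting back collapses the $A_0$, $F$ and $g$ dependence into $\xi_0(u)$, leaving precisely the ratio $G(w,u-t,\mu_t)/G(w,u,\mu_0)$ together with the exponent $w\Lambda(0,t,u) + w^2\lambda(0,t,u)$ once one recognises $\lambda(0,t,u) = e_t(u,\eta)-e_0(u,\eta)+m_t(u,\eta)-m_0(u,\eta)$. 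This reproduces \eqref{eq:fvc-formula}.

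I expect the main obstacle to be the bookkeeping in the factorisation step: one must track that $\mu_t$ is $\F_t$-measurable while the increments $H_{t,u}$, $Y_{t,u}$, $M_{t,u}$ are independent of $\F_t$, and that it is the independence of the three drivers that genuinely decouples the CTMC expectation from the two Gaussian ones. Everything else, namely the two Gaussian variance computations and the $t=0$ normalisation, is routine.
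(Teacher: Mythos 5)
Your proposal is correct and follows essentially the same route as the paper: express $v_u$ via \eqref{eq:fOU-model}, split the stochastic terms at time $t$, use the mutual independence of $\mu$, $Z$ and $\bar{Z}$ to factor the conditional expectation into the CTMC piece $G(w,u-t,\mu_t)$ and the two Gaussian pieces $e_t$ and $m_t$, and then eliminate the prefactor $A_0(u)e^{-wF(u)}e^{wg(u)}$ by evaluating at $t=0$. The only cosmetic difference is that you re-derive the content of \cref{prop:laplace-fOU-RS} inline where the paper simply cites it.
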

\begin{proof}
Note that, by definition
\begin{equation}
\xi_t(u) = A_0(u) \ev{ v_u/A_0(u) \mid \F_t}.
\end{equation}
Then, by \eqref{eq:fOU-model},
\begin{equation}
\xi_t(u) = A_0(u) \ev{ 
\exp \left(
w (X_u - F(u) + \bar{\eta} M_u)
\right)
\mid \F_t
}.
\end{equation}
Now, using \eqref{eq:sol-time-dep}, observe we can write $X$ as
\begin{equation}
X_u = g(u) + H_{0,t}(u) + \eta Y_{0,t}(u) + H_{t,u}(u) + \eta Y_{t,u}(u).
\end{equation}
Since $Z,\mu$ and $\bar{Z}$ are all independent, it follows that $H_{t,u}(u),Y_{t,u}(u)$ and $M_{t,u}(u)$ are all conditionally independent given $\F_t$. Using also the fact that $M_{0,t}(u), Y_{0,t}(u)$ and $H_{0,t}(u)$ are all $\F_t$-measurable, it follows that $X$ and $M$ are conditionally independent given $\F_t$. Hence
\begin{equation}
\xi_t(u) = A_0(u) e^{-wF(u)} \ev{ \exp(wX_u) \mid \F_t)}
\ev{ \exp(w\bar{\eta}M_u) \mid \F_t}.
\end{equation}
Now apply \cref{prop:laplace-fOU-RS} together with formula \eqref{eq:et} and obtain
\begin{equation}
\label{eq:fvc-formula1}
\begin{split}
\xi_t(u) &= A_0(u) G( w,u-t,\mu_t ) \times \\
&\times \exp\left[
w(g(u)-F(u) + \Lambda(0,t,u)) + w^2 (e_t(u, \eta) + m_t(u, \eta))
\right].
\end{split}
\end{equation}
Finally, if we apply \eqref{eq:fvc-formula1} with $t=0$ we have
\begin{equation}
\xi_0(u) = A_0(u) G(w,u,\mu_0) \exp\left[
w (g(u)-F(u)) + w^2 \left( m_0(u,\eta)+e_0(u, \eta) \right)
\right]
\end{equation}
and hence
\begin{equation} \label{eq:a0-market}
A_0(u) = \frac{\xi_0(u)}{G(0,u,\mu_0)} \exp\left[ -w(g(u)-F(u))- w^2 \left( m_0(u,\eta)+e_0(u, \eta) \right)
\right].
\end{equation}
Plugging this equation into \eqref{eq:fvc-formula1} yields the result.
\end{proof}

\begin{rmk}
The forward variance appearing in \eqref{eq:fvc-formula} can either be extracted from the market (see \cite{PricingRough}), or we can make the simplifying assumption of a flat forward variance curve and treat $\xi_0(u) \equiv \xi_0 \in \R^+$ as a model parameter to be specified.
\end{rmk}

\begin{rmk}
We can express the forward variance curve in our model in terms of the classical rBergomi forward variance curve $\xi^{RB}$. In the rBergomi model, for a given initial curve $\xi_0$ and $v=2\sqrt{\gamma}$, the forward variance curve will be given by
\begin{equation}
\log \xi_t^{RB}(u \mid v) := \log \xi_0(u) + vM_{0,t}(u) + v^2(m_t(u, 0) - m_0(u,0))
.
\end{equation}
For our model, we can express the forward variance curve in terms of the rBergomi forward variance curve as follows
\begin{equation}
\log \xi_t(u) = \log \xi_t^{RB}(u \mid \bar{\eta} w) + \Xi,
\end{equation}
where
\begin{equation}
\Xi = \log \frac{G(w, u-t, b_t)}{G(w, u, b_0)} + w(H_{0,t}(u) + \eta Y_{0,t}(u)) + w^2(e_t(u) - e_0(u)).
\end{equation}
We can verify that, as it had to be, should we parameterize our model so that the change of measure is deterministic, i.e. $\eta=0$ and $\mu$ deterministic, then we recover the rBergomi model. Indeed, since $\eta=0$, we have $e_t \equiv e_0 \equiv Y_{0,t} \equiv  0$. Moreover, since $H$ is deterministic,
\begin{align*}
\log \frac{G(w, u-t, \mu_t) }{G(w, u, \mu_0)} &= \log \ev{
\exp (wH_{t,u}(u))  \mid \F_t
}  - \log \ev{ \exp (wH_{0,u}(u)) } \\
&= -wH_{0,t}(u).
\end{align*}
This implies $\Xi = 0$. Finally, since $\eta=0$, we have $\bar{\eta}=1$ and thus
\begin{equation}
\log \xi_t(u) = \log \xi_t^{RB}(u \mid w),
\end{equation}
as expected.
\end{rmk}

\begin{rmk}
If $\mu$ were deterministic, the change of measure would amount to changing the kernel of the Volterra process, and the variance would still be log-normal, leading to an approximately flat VIX smile.
\end{rmk}

\section{Control Variate} \label{sec:control-variate}

As done in \cite{ModulatedVolterra}, we may use the fact that $\xi_t(u)$ is conditionally log-normal to approximate the VIX and use the approximation as a control variate. Namely, we approximate the integral over a family of conditionally log-normal random variables by the exponential of the integral of their logarithms.

By \cref{prop:laplace-fOU-RS}, we know that, conditional on $\mu(s)_{ 0 \leq s \leq t}$, $\log \xi_t(u)$ is Gaussian. This allows us to conclude that
\begin{equation}
N_t := 
\frac{1}{\Delta}
\int_t^{t+\Delta} \log \xi_t(u) \, du
\end{equation}
is also Gaussian. Let us denote its mean by $\mu_N$ and its variance by $\sigma_N^2$. As a consequence of \cref{prop:laplace-fOU-RS}, 
\begin{equation}
\log \xi_t(u) = \bar{m}(u) + 
w( \eta  Y_{0,t}(u) + \bar{\eta} M_{0,t}(u)),
\end{equation}
where
\begin{equation}
\bar{m}(u) = \log \xi_0(u) +\log G(w, u-t, \mu_t) - \log G(w, u, \mu_0) + w^2 \lambda(0, t, u) + wH_{0,t}(u).
\end{equation}
Since $Y$ and $M$ do not depend on $\mu$ and have zero mean, the conditional mean of $\log \xi_t(u)$ is given by $\bar{m}(u)$. Hence, by Fubini's theorem, the conditional mean of $N_t$ is 
\begin{equation}
\mu_N = 
\frac{1}{\Delta}
\int_t^{t+\Delta} \bar{m}(u) \, du.
\end{equation}
Since $Y, M$ are independent given $\mu$, the conditional variance of $N$ satisfies
\begin{equation}
\sigma^2_N = 
\frac{w^2}{\Delta^2}
 \left[ \eta^2 \sigma^2_Y  + (1-\eta^2) \sigma^2_M \right],
\end{equation}
where
\begin{equation}
\sigma^2_Y = \var \left[ \int_t^{t+\Delta} \left( \int_0^t E_\theta(u-s) \, dZ_s \right)  du  \right]
\end{equation}
and
\begin{equation}
\sigma^2_M = \var \left[ \int_t^{t+\Delta} \left( \int_0^t K(u-s) \, d\bar{Z}_s \right) du \right].
\end{equation}

By the stochastic Fubini theorem it follows that
\begin{equation}
\sigma^2_Y = \int_0^t  \left(\int_t^{t+\Delta} E_\theta(u-s) du \right)^2 ds
\end{equation}
and
\begin{equation}
\sigma^2_M = \int_0^t  \left(\int_t^{t+\Delta} K(u-s) du \right)^2 ds.
\end{equation}
In particular, $\sigma_N^2$ does not depend on $\mu$.

We may use \cref{lem:eB-integral} to make the expression for $\sigma^2_Y$ more explicit:
\begin{equation}
\sigma^2_Y  = \frac{1}{\theta^2}  \int_0^t 
\left[
E_{\alpha,1}(-c(t-s)^\alpha) - E_{\alpha,1}(-c(t-s+\Delta)^\alpha)
\right]^2
\, ds,
\end{equation}
where $c = \theta \Gamma(\alpha)$.
Likewise,
\begin{align*}
\sigma^2_M &= \int_0^t  \left(\int_t^{t+\Delta} K(u-s) du \right)^2 du \\
&= \frac{1}{\alpha^2} \left[
\int_0^t (t+\Delta-s)^{2\alpha} + (t-s)^{2\alpha} - 2 (t+\Delta-s)^\alpha(t-s)^\alpha \, ds
\right] \\
&=  \frac{1}{\alpha^2}  \left[
\frac{1}{2\alpha+1} \left(
(t+\Delta)^{2\alpha+1} - \Delta^{2\alpha+1}
 + t^{2\alpha+1}
\right)
- 2 \int_0^t (x^2+x\Delta)^\alpha \, dx
\right].
\end{align*}
Note that, since $\alpha \in (1/2,1)$, the computation of the  above integrals does not involve singularities.

Using the above, we may obtain the cMGF:
\begin{equation}
\ev{ \exp(z N_t) \mid \mu} = \exp\left( z \mu_N + \frac{1}{2}z^2 \sigma^2_N \right).
\end{equation}
Thus, the logarithm of the cMGF is given by
\begin{align*}
\log \ev{ \exp(z N_t) \mid \mu_t} &= 
 \frac{z}{\Delta} \left( 
\int_t^{t+\Delta} \log \xi_0(u) - \log G(w, u, \mu_0) + w^2 \lambda(0,t,u) \, du
\right) +\\
&+ \frac{1}{2} z^2 \sigma_N^2 + \log \ev{ \exp\left( 
\frac{z}{\Delta}
 \int_t^{t+\Delta} \log G(w, u-t, \mu_t) + wH_{0,t}(u) \, du
\right) }.
\end{align*}
If we apply the approximation
\begin{equation}
VIX_t^2 = \frac{1}{\Delta} \int_t^{t+\Delta} \xi_t(u) \, du \approx \exp\left(
N_t
\right),
\end{equation}
we obtain an approximation of the VIX Future
\begin{equation}  \label{eq:cv-future}
\ev{VIX_t} \approx \ev{ \exp\left(\frac{1}{2} N_t\right)}.
\end{equation}
For a call option on the VIX, we have the approximation
\begin{equation} \label{eq:cv-option}
\ev{ 
\ev{
\left( \exp(N_t/2) - K\right)^+
\mid \mu}
} = \ev{
BS(\mu_N/2, \sigma_N^2/4)
},
\end{equation}
where $BS$ is obtained using the Black-Scholes formula:
\begin{equation}
BS(\mu, \sigma^2) = \mathcal{N}(d_+)F - \mathcal{N}(d_-) K,
\end{equation}
where
\begin{equation}
d_{\pm} = \frac{1}{\sigma} \left(
\log \frac{F}{K} \pm \frac{1}{2}\sigma^2
\right)
\end{equation}
and 
\begin{equation}
F = \exp\left(
\mu + \frac{1}{2} \sigma^2
\right).
\end{equation}

\begin{rmk}
In the rBergomi model, $\mu$ is deterministic and thus we obtain a price given (approximately) by the Black-Scholes formula, leading to a flat VIX smile. We can thus see the VIX option price in our model as (approximately) a weighted average of the rBergomi prices, where the weights are determined by the parameters of the CTMC $\mu$.
\end{rmk}

\section{Variance Reduction via Importance Sampling} \label{sec:var-red}

The distribution of dwelling times for the more extreme values of the CTMC can be deeply unbalanced: only a small percentage of generated paths will contain a significant dwelling time for the more extreme states. For out-of-the-money (OTM) options, both for the SP500 and the VIX, these paths have a significant contribution on the mean, since they will generate non-zero payouts for call options. Thus, it is natural to employ a Monte Carlo Variance Reduction technique (MCVR) via importance sampling.

In general, if we are interested in estimating $\ev{f(X)}$ for a measurable function $f$ and a random variable $X$ distributed according to a density $p$, the simple Monte Carlo approach is to compute
\begin{equation} \label{eq:simple-mc}
\frac{1}{n} \sum_{i=1}^n f(X_i) , X_i \sim p \, \forall i=1, ..., n.
\end{equation}
Should we consider an alternative density $h$ with the intent of reducing the variance of the estimator, we may sample $Y \sim h$ and weight the samples by the likelihood ratio $p/h$:
\begin{equation}
\frac{1}{n} \sum_{i=1}^n f(Y_i) \frac{p(Y_i)}{h(Y_i)} , Y_i \sim h \, \forall i=1, ..., n,
\end{equation}
producing an unbiased estimator for $\ev{f(X)}$. If $p=h$ we of course recover \eqref{eq:simple-mc}. Thus, in order to perform the MCVR, we need access to the densities $p$ and $h$.

Let us consider our case of the CTMC with $m$ states starting at state $s_0$. Let $\{q_i\}_{i=1}^n$ be the jump intensities associated with each state and $p_{i,j}$ the probability of switching from state $i$ to state $j$. Each path of the CTMC with $k$ jumps is identified by the sequence of states it attains $s=(s_0, s_1, ..., s_k)$ and dwelling times $t=(t_0, ..., t_{k-1} )$. By successive conditioning and using the Markov property it is easy to see that the density is given by
\begin{equation}
p(s, t) = e^{-q_m t_m} \prod_{i=0}^{k-1} p_{s_i, s_{i+1}} q_{s_i} e^{-q_{s_i}t_i} \ind{ A_k },
\end{equation}
where
\begin{equation}
t_k = T - \sum_{i=0}^{k-1} t_i
\end{equation}
and
\begin{equation}
A_k = \left\lbrace 
(t_0, t_1,..., t_{k-1} ) \in \R^k :
\sum_{i=0}^{k-1}t_i < T
 \right\rbrace.
\end{equation}

In order to reduce the variance of the estimator, we may simply use a uniform distribution for the dwelling times, which is given by the inverse of the volume of the region under a $(k-1)$-simplex:
\begin{equation}
\frac{1}{h\st} = \int_{ A_k } 1 = \frac{T^k}{k!}.
\end{equation}

Thus, for a function of interest $f=f\st$, our MCVR estimator is given by
\begin{equation} \label{eq:vr-estimator-initial}
\sum_{k=0}^{+\infty} \frac{T^k}{k!} 
\sum_{s \in J(k)}  \frac{1}{n}\sum_{i=1}^n f\sti p\sti,
\end{equation}
where $J(k)$ is the set of all possible state sequences starting at $s_0$ with $k$ jumps, and $\textbf{t} \sim \text{Uniform}(A_k)$, i.e. it is uniformly distributed under the $(k-1)$-simplex. Such a sample can be obtained by sampling from the $k$-dimensional flat Dirichlet distribution (thus obtaining a uniform sample on the $k$-simplex) and removing the last coordinate. 

Next, there are two adjustments that still have to be made to \eqref{eq:vr-estimator-initial}. First, we do not consider the number of jumps up to infinity, but rather truncate the series in by an adequate maximum number of jumps $K_M$, which is chosen based on the model parameters. Secondly, using the same number of samples for every number of jumps is sub-optimal since it dramatically increases the computational cost. For this reason, we applied a stratified sample approach, where the sample size for each number of jumps $n_k$ is in general proportional to its probability mass, but at the same time each number of jumps is required to have a minimum sample size. This allows us to substantially reduce the variance with only a small increase of computational cost. Thus, the MCVR estimator is given by
\begin{equation}\label{eq:vr-estimator}
\sum_{k=0}^{K_M} \frac{T^k}{k!} 
\sum_{s \in J(k)} \frac{1}{n_k}\sum_{i=1}^{n_k} f\sti p\sti.
\end{equation}

Before we dwell into the application of the MCVR to pricing VIX and SP500 options, we note that it can be used to compute \eqref{eq:def-G} for each $\tau$, which is needed for both the VIX and the SP500 options valuation, with 
\begin{equation}
f_\tau\st = \exp\left(
w \int_0^\tau \theta E_\theta(\tau - u ) \mu\st(u) \, du \right)
.
\end{equation}

\subsection{VIX} \label{sec:MCVR-VIX}

Let us consider the approximations for the VIX future and option given by \eqref{eq:cv-future} and \eqref{eq:cv-option}, respectively. Note that the randomness in these approximations depends only on the CTMC $\mu$. The MCVR estimator for the VIX future is then given by \eqref{eq:vr-estimator}
with
\begin{equation}
f\st = \exp \left(
\frac{1}{2} \mu_N\st + \frac{1}{8} \sigma^2_N
\right),
\end{equation}
with $\mu_N$ and $\sigma^2_N$ as in \cref{sec:control-variate}. Note that $\mu_N$ is a deterministic function of the path of the CTMC and $\sigma^2_N$ is deterministic.

In a similar way, the MCVR estimator for the VIX option is built using
\begin{equation}
f\st = BS( \mu_N\st/2, \sigma^2_N/4).
\end{equation}

\subsection{SP500} \label{sec:MCVR-sp}

To price SP500 options, the payout function will also depend on the Brownian paths. But even in this scenario, we may apply the MCVR. We start by considering the Brownian increments of the sBm's $Z, \bar{Z}$ and $B$, which we will denote by $\textbf{W}$. Note that $\textbf{W}$ is independent of $\st$. In the MCVR, we do not consider an alternative density for $\textbf{W}$ but only for $\st$. By independence we obtain the following MCVR estimator of $\ev{f(\textbf{s}, \textbf{t}, \textbf{W})}$
\begin{equation} 
 \sum_{k=0}^{K_M} \frac{T^k}{k!} 
\sum_{s \in J(k)} \frac{1}{n_k}\sum_{i=1}^{n_k} 
f(\textbf{s}, \textbf{t}_i, \textbf{W}_i) 
 p\sti.
\end{equation}

In the case of the SP500 call option with maturity $t$ and strike $K$, the function $f$ is simply given by 
\begin{equation}
f(\textbf{s}, \textbf{t}, \textbf{W}) = \left(
S_t( \textbf{s}, \textbf{t}, \textbf{W}) -K
\right)^+,
\end{equation}
where we use the fact that the Brownian increments, together with the path of the CTMC, totally determine the (discretization of the) underlying price path.

The results in \cref{sec:performance} show that the generated smiles are, on average, very similar, but the variance (and hence the computational cost) reduction  is very significant for the VIX when applying MCVR.

\section{Numerical implementation} \label{sec:numerical}

All numerical simulations are performed on a Linux machine (Ubuntu 20.4), with an AMD Ryzen 7 3800X CPU with 16 threads. The Python language was used, resorting to \textit{numpy} whenever possible for C-like speed. We also made use of the \textit{numba} package, which allows for just in time compilation of code. This is especially useful for code that involves long/nested loops over nested lists of varying length, such as the ones that commonly appear in code related  to CTMC's. These kinds of routines are not easily vectorized in \textit{numpy} and using \textit{numba} allows for much faster execution time without writing convoluted \textit{numpy} code.

\subsection{Pricing SP500 options} \label{sec:pricing-SP500}

To price SP500 options, we use the MCVR method of \cref{sec:MCVR-sp}. Once the variance process is computed, the price process \eqref{eq:rB-price} is easily obtained using a standard Riemann sum approximation.

In order to simulate the variance, we use \eqref{eq:fOU-model}. To simulate $X$, we use \eqref{eq:sol-time-dep}. The process $M$ is a Riemann-Liouville fBm, so we may use the hybrid scheme of \cite{Hybrid2017}. 

By \cref{lem:eB-exact}, $E_\theta=t^{\alpha-1}\psi(t)$, where $\psi$ is a continuous function expressed in terms of the Mittag-Leffler function. Thus, the process $Y$ can also be computed with the hybrid scheme, setting $L_g = \psi$ in the notation of \cite[Remark 3.1]{Hybrid2017}. For the computation of the \ML{}, we used the library provided by \cite{ML-Python}. The function $g$ is known exactly. The process $H$ can be approximated using \cref{prop:Hg-exact}.

Using the \textit{numba} package significantly improved the simplicity and performance of the simulation of $H$, due to the fact that the number of jumps is potentially different for each path.

Finally, the initial curve $A_0$ can be obtained from $\xi_0$ via \eqref{eq:a0-market}.

\subsection{Pricing VIX options}

\subsubsection{Simple Monte Carlo}

Pricing VIX options via simple Monte Carlo amounts to simulating the forward variance curve, for which we resort to \eqref{eq:fvc-formula}. 

The terms $Y_{0,t}(u)$ and $M_{0,t}(u)$ are Volterra Gaussian processes with non-singular kernels, and thus can be approximated using the standard Riemann sum method. The term $H_{0,t}(u)$ can be approximated using \cref{prop:Hg-exact}. The function $m_t$ poses no difficulty, since it has a closed expression. For the function $e_t$, which will involve an integral with a weakly singular integrand, we may use the procedure of \cref{sec:num-integral}.

Finally, we need to compute $G(w, u-t, b^{(i)}_t)$, for each simulation $i$ and $u \in [t, t+\Delta]$. Fortunately, $\mu_t$ can only attain a finite number of values. Thus, we may easily perform a simple Monte Carlo simulation for each of the $m$ possible values of $\mu_t$. That is, we may create $M$ paths of $\mu$ conditioned on $\mu_0=z_k$ for each possible state ${z_1,z_2, ..., z_m}$ and then approximate
$$
G(w, u-t,z_k) \approx \frac{1}{M} \sum_{j=1}^M \exp \left( 
w \int_0^{u-t} \theta E_\theta(u-t-s)\mu^{(j)}(s) \,ds\right).
$$
Again, we use \cref{prop:Hg-exact} to compute the integral.
\begin{rmk}
The pricing of VIX options only requires the Brownian increments of $Z, \bar{Z}$ and $B$, but does not require simulating the processes $H, Y$ and $M$. Recall these processes are computationally more expensive to simulate than $H_{0,t}(u), Y_{0,t}(u)$ and $M_{0,t}(u)$ because of the presence of singularities in the kernels, which requires the hybrid scheme in the case of $Y$ and $M$.
\end{rmk}
\subsubsection{Control Variate and Variance Reduction}
We may use the approximations of \cref{sec:control-variate} as a control variate to reduce the variance of the Monte Carlo estimator, or we may see it as an approximation and drop the simple Monte Carlo method altogether. In practice, it is much more efficient to use it as an approximation since the correlation between the control variate and the Monte Carlo estimator is very high. Finally, the variance reduction techniques described in \cref{sec:MCVR-VIX} let us further reduce the variance of the estimator.
\FloatBarrier
\section{Calibration} \label{sec:calib}

We calibrate our model to the VIX and SP500 smiles separately and then perform a joint calibration. Due to much faster computing times, we used the MCVR estimators of \cref{sec:var-red} for both the SP500 and VIX Smiles. We found the maximum number of jumps $K_M = 4$ to be adequate for the purposes of this paper.

We used the \textit{LMFIT} library for the calibration routines (see \cite{LMFIT}). We performed some exploratory tests and found the \textit{Levenberg-Marquardt} and \textit{Trust Constraint} methods to be particularly effective. The parameter space is simply a multidimensional rectangle. We fixed $\rho=-0.95$, $x_0=0$ and $s_0=0$ in all calibrations, in order to simplify the calibration procedure. 

We tested the calibration with both $m=3$ and $m=2$ states for the CTMC. Since $m=3$ requires 6 more parameters than $m=2$, the calibration was both slower and displayed greater dependence on the initial condition. Moreover, the quality of the fits did not substantively differ, and it was frequent for $\mu_2$ and $\mu_3$ to be close, indicating that perhaps the model was over-parameterized with $m=3$ states. Devising  a robust and efficient calibration procedure with $m=3$ states that substantially improves the quality of the fits is an endeavor that we leave for further research.

All smiles were extracted from Yahoo finance and date from $19$ January 2021. The implied volatilities were computed from the average of the the bid and ask prices.
\FloatBarrier
\subsection{Calibration to SP500 options}
Although the original rBergomi model is a particular case of our \fOU model, and for this reason our model inherits the ability to fit SP500 smiles, it is still interesting to study the calibration to the SP500 under the general \fOU model we propose. First, because potentially multiple minima exist, and thus there can be parameterizations of the model that are not the rBergomi model. Secondly, because the rBergomi models corresponds to choices of parameters in the boundary of the parameter space (zero intensity for jumps, zero mean reversion value or setting all states of CTMC to the same value), making it hard for the calibration procedure to converge to the rBergomi model. Testing the ability of the calibration procedure and our proposed model to adjust to SP500 smiles adds confidence in both the reliability of the calibration procedure and the flexibility of the model.
\begin{table}[H]
\centering
\begin{tabular}{||c c c c||} 
 \hline
 \textbf{Parameter} & \textbf{Min} & \textbf{Max} & \textbf{Calibrated} \\ [0.5ex] 
 \hline\hline
$H$ & 0.07 & 0.13 & 0.0846 \\
$\rho$ & \textit{fixed} & \textit{fixed} & -0.95 \\
$\eta$ & -0.99 & 0.99 & -0.3021 \\
$\theta$ & 0.1 & 10.0 & 1.6672 \\
$\gamma$ & 0.0 & 0.4 & 0.3367 \\
$\mu_1$ & 0.0 & 1.0 & 0.0005 \\
$\mu_2$ & 0.0 & 20.0 & 16.0288 \\
$q_1$ & 0.0 & 2.0 & 0.0193 \\
$q_2$ & 0.0 & 15.0 & 14.4128 \\
$\xi_0$ & 0.0001 & 0.25 & 0.0553 \\
$x_0$ & \textit{fixed} & \textit{fixed} & 0.0 \\
 \hline
\end{tabular}
\caption{SP500 Calibration -- Calibrated Values}
\label{sp-table}
\end{table}
In \cref{sp-table}, we present the calibration results and parameter space for the SP500 calibration. As expected, the model did not converge to the rBergomi model, which lies in the boundary of the parameter region space. In \cref{sp-smile}, we observe the calibrated smile provides a reasonable fit to the observed market smile, staying mostly inside the bid-ask spreads.
\begin{figure}
\centering
\includegraphics[width=345 pt]{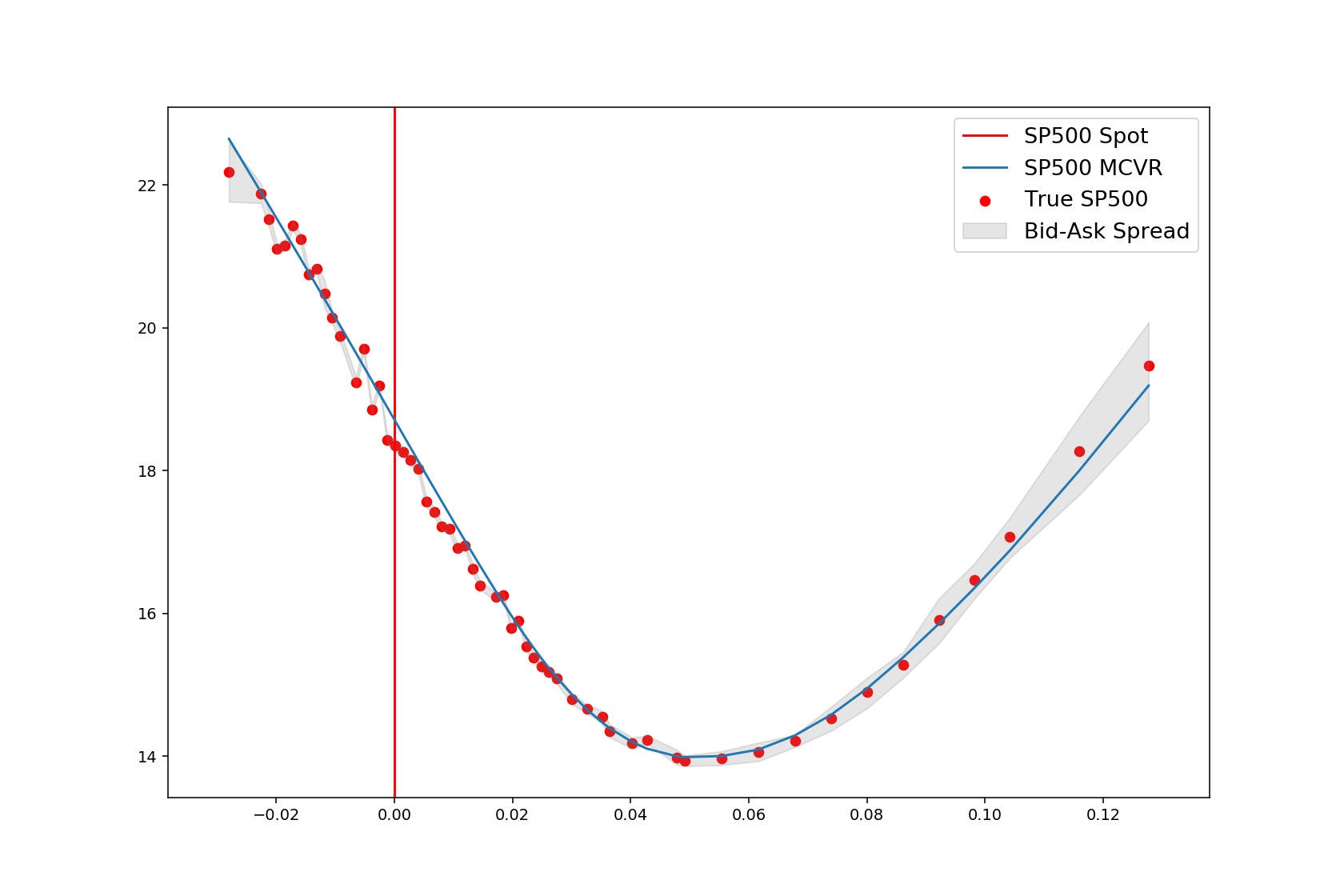}
\caption{Calibrated SP500 Smile}
\label{sp-smile}
\end{figure}
\FloatBarrier
\subsection{Calibration to VIX options and future}
For the calibration to VIX smile, our loss function combined the error with respect to both VIX future and VIX options. The Control Variate approximation described in \cref{sec:control-variate} for VIX options and the VIX future only depends on $\eta^2$. Since we used the MCVR method of \cref{sec:MCVR-VIX}, which inherits the properties of the Control Variate, we limited $\eta \in (0,0.99)$ in the parameter space for the VIX calibration. In \cref{calib-vix-table}, we see the calibrated values.
\begin{table}
\centering
\begin{tabular}{||c c c c||} 
 \hline
 \textbf{Parameter} & \textbf{Min} & \textbf{Max} & \textbf{Calibrated} \\ [0.5ex] 
 \hline\hline
$H$ & 0.07 & 0.13 & 0.0938 \\
$\rho$ & \textit{fixed} & \textit{fixed} & -0.95 \\
$\eta$ & 0.0 & 0.99 & 0.1373 \\
$\theta$ & 0.1 & 10.0 & 5.9165 \\
$\gamma$ & 0.01 & 0.2 & 0.1751 \\
$\mu_1$ & 0.0 & 1.0 & 0.1239 \\
$\mu_2$ & 0.0 & 20.0 & 4.8671 \\
$q_1$ & 0.0 & 2.0 & 0.699 \\
$q_2$ & 0.0 & 15.0 & 13.4365 \\
$\xi_0$ & 0.0001 & 0.25 & 0.0654 \\
$x_0$ & \textit{fixed} & \textit{fixed} & 0.0 \\
 \hline
\end{tabular}
\caption{VIX Calibration -- Calibrated Values}
\label{calib-vix-table}
\end{table}
\begin{figure}[H]
\centering
\includegraphics[width=345 pt]{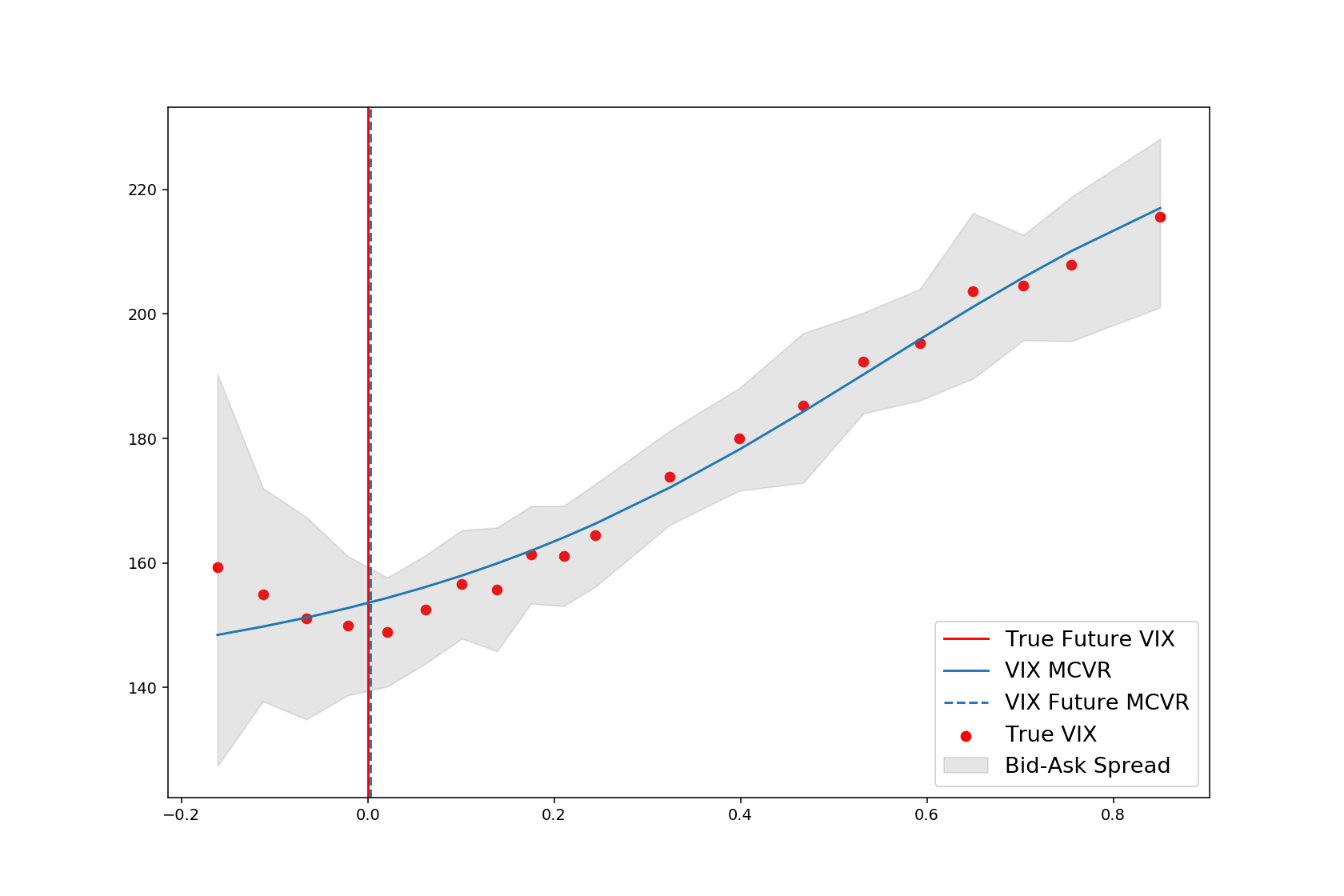}
\caption{Calibrated VIX Smile}
\label{vix-smile}
\end{figure}
\FloatBarrier
\subsection{Joint calibration to both SP500 and VIX options}
The SP500-VIX Joint calibration has been called \textit{the holy grail of volatility modeling} (see \cite{Joint-VIX-rHeston}). One of the main challenges in  rBergomi-like models is that the large SP500 observed in the market will require for the vol-of-vol parameter $\gamma$ to be very large, which would produce VIX implied volatilities of a much higher magnitude than we observe. The extra flexibility provided by our regime switching change of measure allows the model to reproduce the large SP500 skew whist keeping VIX implied volatilities close to the ones observed in the market.
\begin{table}
\centering
\begin{tabular}{||c c c c||} 
 \hline
 \textbf{Parameter} & \textbf{Min} & \textbf{Max} & \textbf{Calibrated} \\ [0.5ex] 
 \hline\hline
$H$ & 0.07 & 0.13 & 0.114 \\
$\rho$ & \textit{fixed} & \textit{fixed} & -0.95 \\
$\eta$ & -0.99 & 0.99 & -0.3792 \\
$\theta$ & 0.0 & 6.0 & 5.6312 \\
$\gamma$ & 0.0 & 0.3 & 0.2468 \\
$\mu_1$ & 0.0 & 5.0 & 1.004 \\
$\mu_2$ & 0.0 & 20.0 & 6.7563 \\
$q_1$ & 0.0 & 2.0 & 0.2821 \\
$q_2$ & 0.0 & 15.0 & 10.1285 \\
$\xi_0$ & 0.0001 & 0.25 & 0.0462 \\
$x_0$ & \textit{fixed} & \textit{fixed} & 0.0 \\
 \hline
\end{tabular}
\caption{Joint Calibration -- Calibrated Values}
\label{calib-joint-table}
\end{table}
In \cref{joint-smile}, we see that the joint calibration is not as good as individual SP500 or VIX calibrations. Nevertheless, it provides a good approximation, which can potentially be improved in future works. %
\begin{figure}
\centering
\includegraphics[width=345 pt]{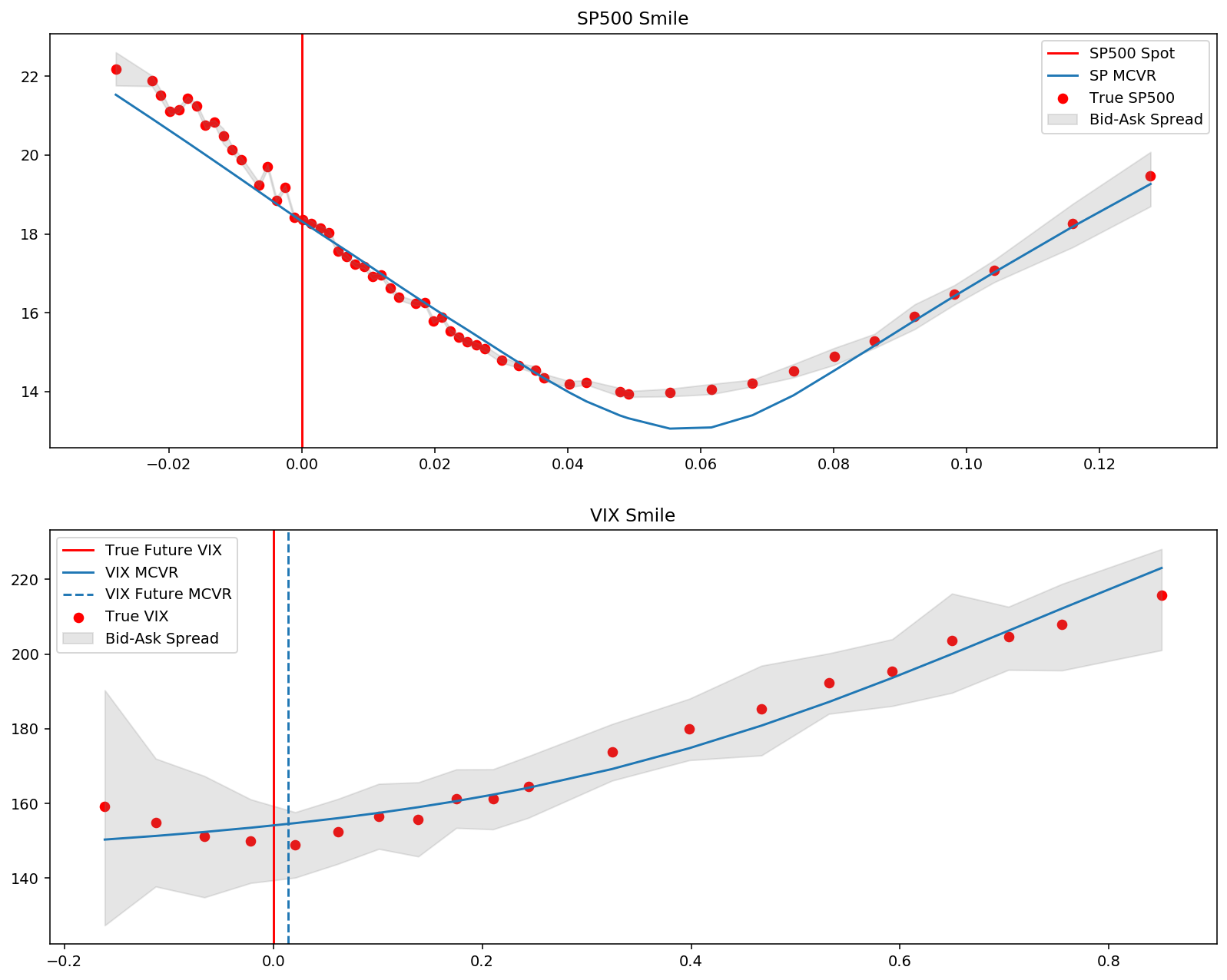}
\caption{Calibrated Joint SP500-VIX Smile}
\label{joint-smile}
\end{figure}
\FloatBarrier
\section{Performance of Numerical Methods} \label{sec:performance}
In order to evaluate the performance of the MCVR, we generate multiple smiles ($1000$ for the SP500 and $2000$ for the VIX) with approximately the same computational budget, using the Monte Carlo Variance Reduction and Simple Monte Carlo methods. In the case of the VIX, we also present the comparison with the Control Variate approximation method. We compute the mean smile across the entire sample, in order to validate that the methods are equivalent in the sense that they produce approximately the same results. We also compute the standard deviation, to compare the performance of the methods. We used the calibrated values found in \cref{sec:calib}.
\FloatBarrier
\subsection{SP500}
For the SP500, as it can be seen in \cref{sp-VarRed-stats}, the generated smiles are indeed very close, As expected, the variance reduction method did not reduce the variance for ITM options but only for OTM options. This is due to the fact that the SP500 smile depends on the generated Brownian increments, which are not affected by the variance reduction method. This behavior is also displayed in \cref{sp-VarRed-errors}. 
\begin{figure}
\centering
\includegraphics[width=345 pt]{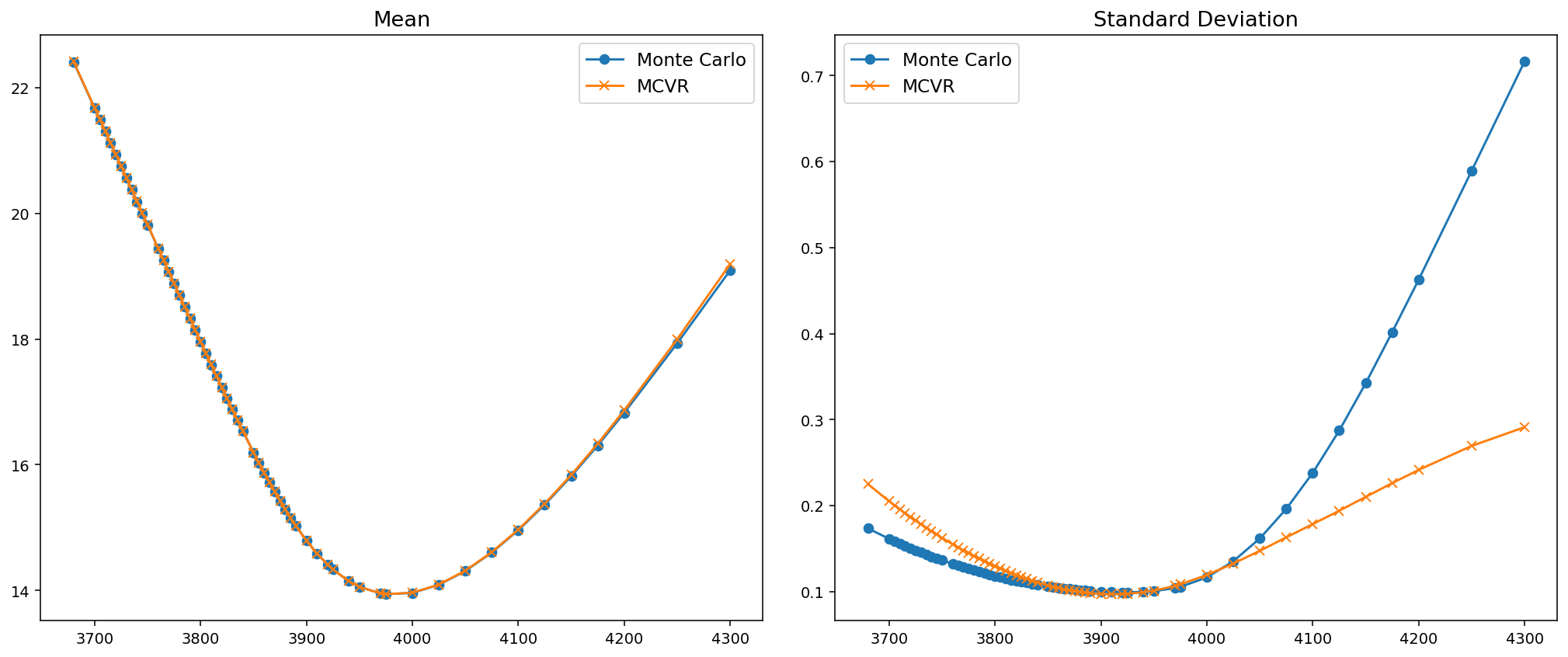}
\caption{Statistics comparing the various numerical methods - SP500}
\label{sp-VarRed-stats}
\end{figure}
\begin{figure}
\centering
\includegraphics[width=345 pt]{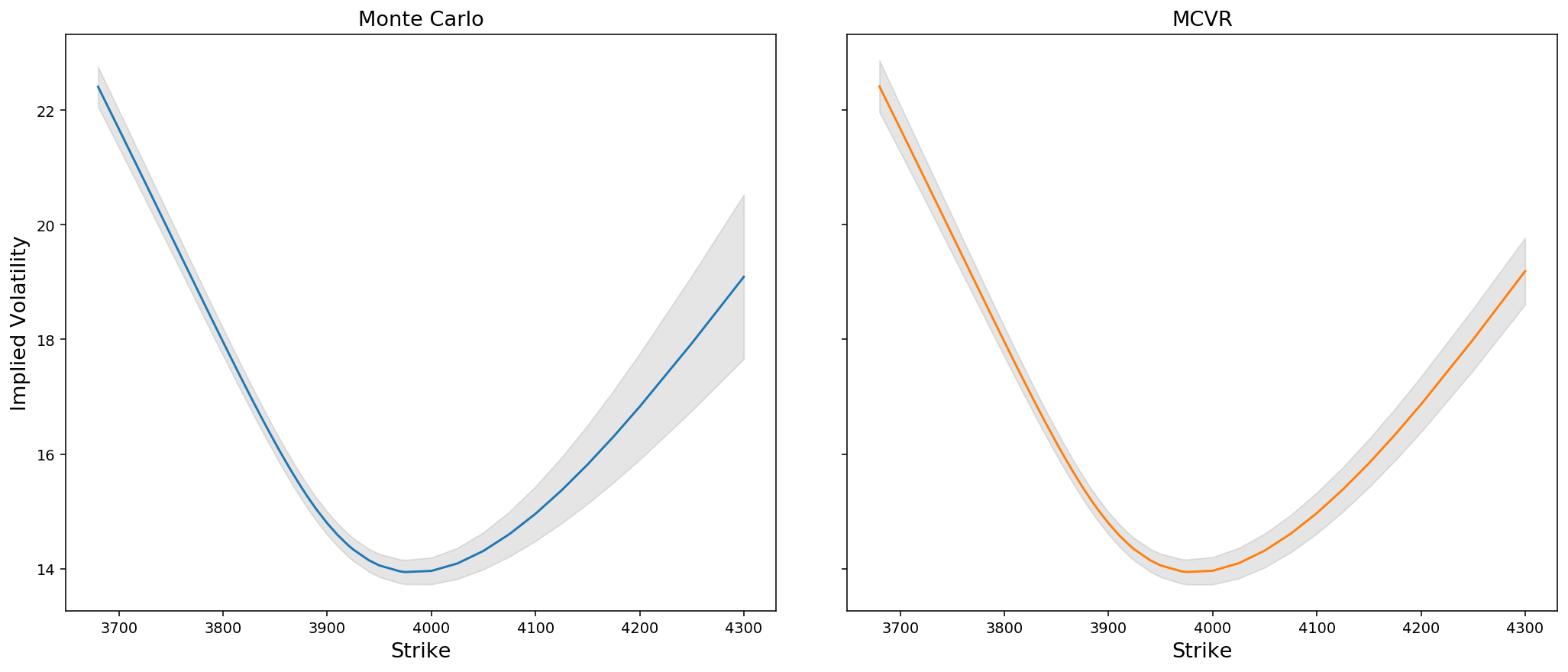}
\caption{Error bars for various numerical methods (2 Standard Deviations) - SP500}
\label{sp-VarRed-errors}
\end{figure}
We also note that the impact of the MCVR depends on the chosen model parameters. We performed a SP500 calibration with $m=3$ states, and generated SP500 smiles using the calibrated parameters. In this case, the MCVR significantly reduced the variance, as it can be seen in \cref{sp-VarRed-stats-3s} and \cref{sp-VarRed-errors-3s}.
\begin{figure}
\centering
\includegraphics[width=345 pt]{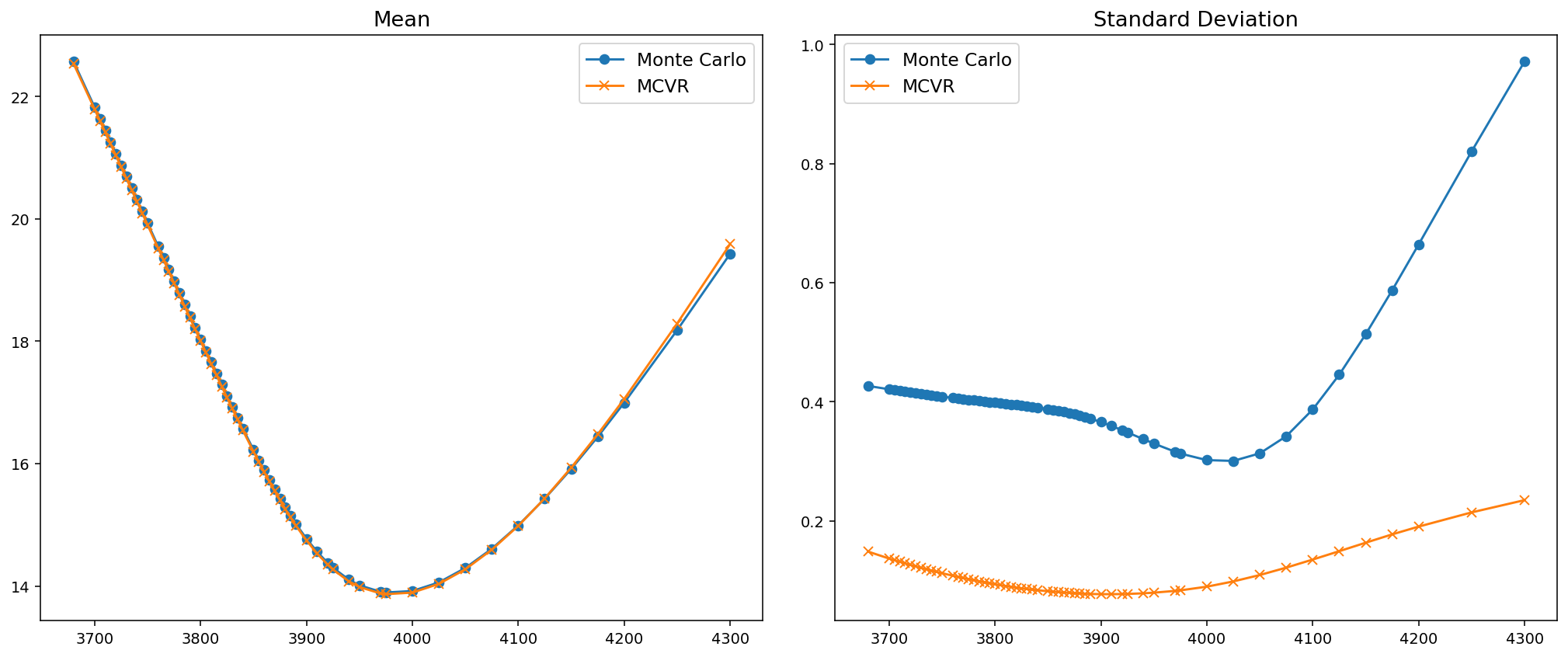}
\caption{Statistics comparing the various numerical methods - SP500 with 3 states}
\label{sp-VarRed-stats-3s}
\end{figure}
\begin{figure}
\centering
\includegraphics[width=345 pt]{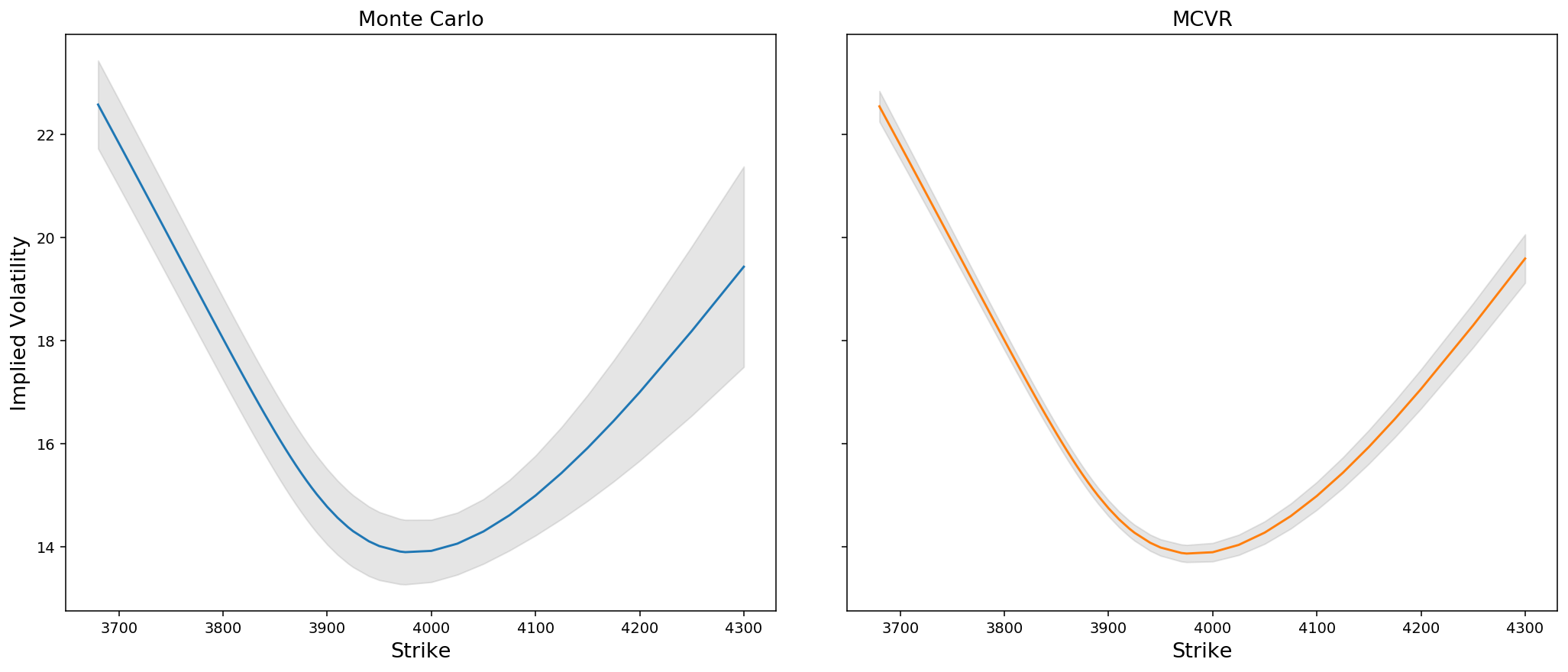}
\caption{Error bars for various numerical methods (2 Standard Deviations) - SP500 with 3 states}
\label{sp-VarRed-errors-3s}
\end{figure}
\FloatBarrier
\subsection{VIX}
In the case of the VIX, we first note in \cref{vix-VarRed-stats} that there is a slight difference between the smiles generated by simple Monte Carlo and the other methods. This is a consequence of the two facts: first, the Control Variate provides a (very good) approximation of the simple Monte Carlo method, but it is not exactly equal; secondly, and most importantly,  the simple Monte Carlo method converges slowly with the number of time steps and we would need a larger number of time steps than the one permitted by the computational budget (which was limited since we had to generate a large number of smiles) for the smiles to be closer. We see that the Control Variate method substantially decreases the variance compared to the simple Monte Carlo, and the MCVR further decreases the variance of the Control Variate method.

\begin{figure}[H]
\centering
\includegraphics[width=345 pt]{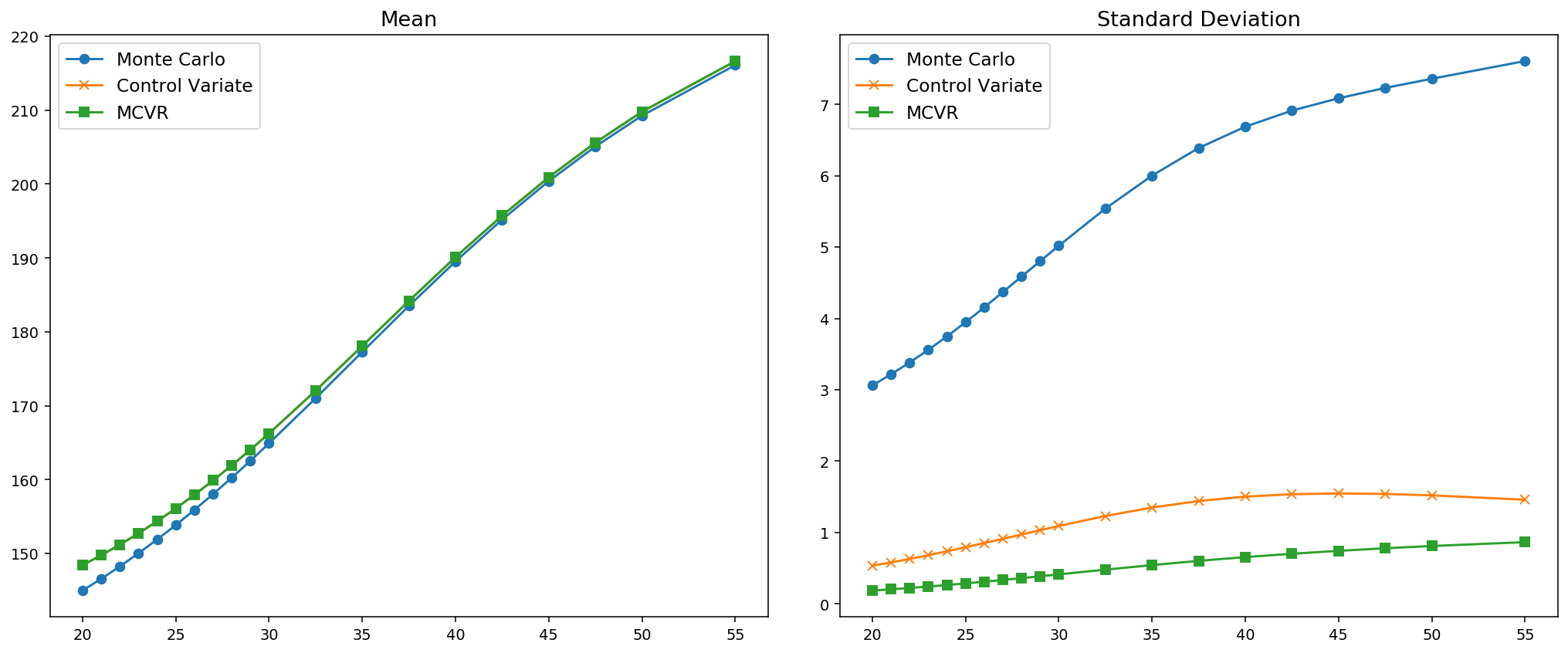}
\caption{Statistics comparing the various numerical methods - VIX}
\label{vix-VarRed-stats}
\end{figure}

\begin{figure}[H]
\centering
\includegraphics[width=345 pt]{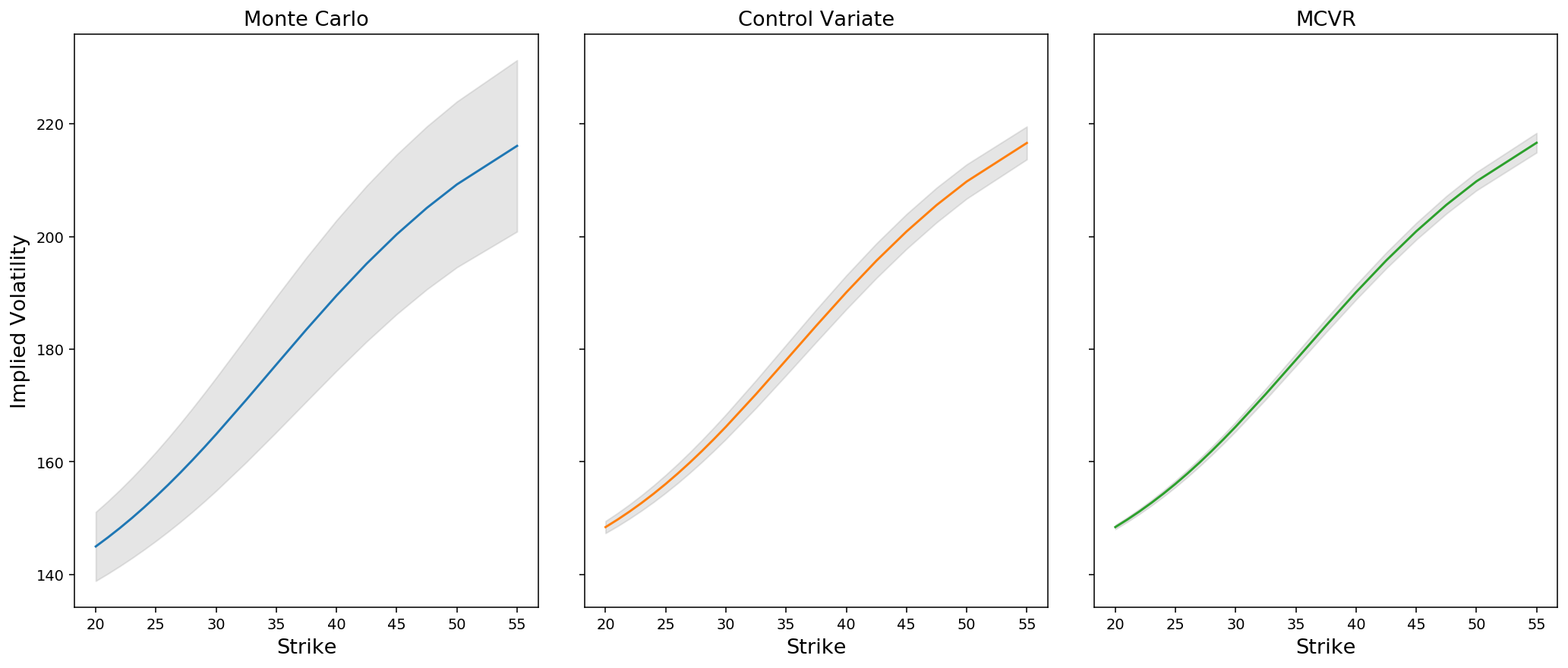}
\caption{Error bars for various numerical methods (2 Standard Deviations - VIX}
\label{vix-VarRed-errors}
\end{figure}

\newpage
\FloatBarrier
\section{Conclusion and further research} \label{sec:conclusion}

In this paper, we proposed a generalization of the rBergomi model, where the change of measure is stochastic and written in terms of a regime-switching \fOU process. From the semi-closed expression for the solution of the fractional SDE, we obtain a semi-closed form for the conditional moment generating function,  which in turn lets us obtain a semi-closed expression for the VIX. 

The fact that the forward variance curve is conditionally log-normal allows for an approximation of the VIX, whose options and future prices are much more efficient to compute. Applying an importance sampling technique to the continuous time Markov chain, allows to further reduce the variance, which leads to fast pricing of VIX options and futures.

The explicit stochastic change of measure we propose allows us to reproduce the upward slopping VIX smiles observed in the market, which the original rBergomi could not do, since it produced flat smiles. The flexibility of the model is also displayed in the fact it can produce good fits to SP500 smiles without falling back into the original rBergomi model. For this reason, the model is even able to provide reasonable fits to the SP500-VIX joint calibration. Thus, the \fOU model we propose has the ability to describe market dynamics both under the physical measure $\mathbb{P}$ and the pricing measure $\mathbb{Q}$.  

We conclude with a few topics for further research. First, to see if it is possible to improve the quality of the fits to observed smiles by increasing the number of possible states (perhaps three). Secondly, this has to be accompanied by either a more efficient calibration procedure or a significant decrease in computing time, since we found the calibration to be too slow and reliant on the initial condition when using three states. Finally, it remains an open question if a different choice for the dynamics of the  long term mean of the \fOU process would be able to provide better fits to observed smiles whilst keeping pricing and calibration computing times at a reasonable level.

\newpage
\appendix 
\section{Appendix - Useful Convolution Results} \label{sec:app-convolution}

In this section, we present some well known convolution results and provide some proofs for the sake of completeness.

\begin{dfn}
Define the translation operator by
\begin{equation}
(\tau_y f)(x) = f(x-y).
\end{equation}
\end{dfn}

\begin{rmk}
It is a known fact that for $1 \leq p < \infty$, the translation operator is continuous on $L^p$ in the sense that for any $f \in L^p$,
\begin{equation}\label{eq:tau-Lp}
\lim_{y \to 0} \norm{ \tau_y f - f}_p = 0.
\end{equation}
\end{rmk}

\begin{lem}\label{lem:cts-conv-general}
Let $1 \leq p \leq \infty$, and $f \in L^p, g \in L^q$, where $1/p + 1/q = 1$. Then $f \star g$ is continuous.
\end{lem}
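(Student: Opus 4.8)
The plan is to prove continuity pointwise by estimating the increment $(f\star g)(x+h)-(f\star g)(x)$ and showing it vanishes as $h\to 0$. First I would record that the convolution is well-defined everywhere: since the reflection $s\mapsto g(x-s)$ has the same $L^q$ norm as $g$, Hölder's inequality gives $\abs{(f\star g)(x)}\le \norm{f}_p\norm{g}_q<\infty$ for every $x$. For the increment, write
\begin{equation*}
(f\star g)(x+h)-(f\star g)(x)=\int f(s)\,[\,g(x+h-s)-g(x-s)\,]\,ds,
\end{equation*}
and apply Hölder again to separate the $L^p$ norm of $f$ from the $L^q$ norm of the translated difference of $g$.

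The key observation is that the difference of translates of $g$ is itself a single translation increment. Setting $\tilde g(s)=g(-s)$, one has $g(x-s)=(\tau_x\tilde g)(s)$, so that $g(x+h-s)-g(x-s)=\bigl(\tau_h(\tau_x\tilde g)-\tau_x\tilde g\bigr)(s)$. Reflection and translation are isometries of $L^q$, so $\tau_x\tilde g\in L^q$, and therefore
\begin{equation*}
\abs{(f\star g)(x+h)-(f\star g)(x)}\le \norm{f}_p\,\norm{\tau_h(\tau_x\tilde g)-\tau_x\tilde g}_q .
\end{equation*}
By the continuity of translation in $L^q$, equation \eqref{eq:tau-Lp}, the right-hand side tends to $0$ as $h\to 0$, giving continuity at each point $x$.

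The main obstacle is the endpoint case, since \eqref{eq:tau-Lp} holds only for finite exponents and one of $p,q$ may equal $\infty$ (for instance $p=1$, $q=\infty$). The remedy is the symmetry of convolution, $f\star g=g\star f$: because $1/p+1/q=1$ with $p,q\ge 1$ forces at least one of the two exponents to be finite, I would assume without loss of generality that the finite exponent is the one attached to the function I translate. Concretely, if $p<\infty$ I instead write $(f\star g)(x)=\int g(s)\,f(x-s)\,ds$ and bound the increment by $\norm{g}_q\,\norm{\tau_h(\tau_x\tilde f)-\tau_x\tilde f}_p$ with $\tilde f(s)=f(-s)$, applying \eqref{eq:tau-Lp} in $L^p$ instead; if $q<\infty$ the first estimate already works. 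In either case the bound goes to $0$, so $f\star g$ is continuous everywhere, completing the proof.
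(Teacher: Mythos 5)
Your proof is correct and follows essentially the same route as the paper's: both reduce the increment of $f\star g$ to a single translation increment of one factor, apply H\"older's inequality against the other factor, invoke the $L^r$-continuity of translation \eqref{eq:tau-Lp} for the finite exponent, and dispose of the endpoint case by assuming without loss of generality that the translated factor has finite exponent. The only cosmetic difference is that you shift the translation onto $g$ via a reflection, whereas the paper uses the identity $\tau_y(f\star g)=(\tau_y f)\star g$ to keep it on $f$.
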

\begin{proof}
Without loss of generality, assume $p< \infty$, if not, change the roles of $p$ and $q$. To conclude $f \star g$ is continuous, it suffices to show that
\begin{equation}
\lim_{y \to 0} | (\tau_y(f \star g))(x) - (f \star g)(x)| = 0.
\end{equation}
Observe that
\begin{align*}
(\tau_y(f \star g))(x) &= (f \star g)(x-y) \\
&= \int f(x-y-t)g(t) \, dt \\
&= \int (\tau_yf)(x-t)g(t) \, dt \\
&= ((\tau_y f) \star g)(x).
\end{align*}
Thus
\begin{equation}
\tau_y (f \star g) - (f \star g) = (\tau_y f) \star g - f \star g = (\tau_yf - f) \star g.
\end{equation}
Finally, by H{\"o}lder's inequality and the $L^p$-continuity of the translation operator \eqref{eq:tau-Lp}
\begin{equation}
|\tau_y (f \star g) - (f \star g)| \leq \norm{ \tau_y f - f}_p \norm{g}_q \to 0.
\end{equation}
\end{proof}

\begin{lem}\label{lem:cts-conv}
Let $1 \leq p \leq \infty$, and $f \in L^p_{loc}, g \in L^q_{loc}$, where $1/p + 1/q = 1$ and $f, g$ have support on $\R^+$.
Then $f \star g$ is continuous and has support on $\R^+$.
\end{lem}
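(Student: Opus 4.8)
The plan is to deduce this from the global statement \cref{lem:cts-conv-general} by a truncation (localization) argument. First I would dispose of the support claim, which is essentially immediate: since $f$ and $g$ vanish on the negative half-line, the convolution reduces to
\begin{equation}
(f \star g)(t) = \int_0^t f(s) g(t-s)\, ds
\end{equation}
for $t \geq 0$, while $(f \star g)(t) = 0$ for $t < 0$. The latter identity is precisely the assertion that $f \star g$ has support on $\R^+$, so it remains only to establish continuity.

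For continuity I would fix an arbitrary $T > 0$ and truncate, setting $f_T := f \ind{[0,T]}$ and $g_T := g \ind{[0,T]}$. Because $f \in L^p_{loc}$ and $g \in L^q_{loc}$ and these truncations vanish outside the bounded set $[0,T]$, we have $f_T \in L^p$ and $g_T \in L^q$ globally (on $\R$). Then \cref{lem:cts-conv-general} applies directly and yields that $f_T \star g_T$ is continuous on all of $\R$.

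The next step is to check that $f \star g$ agrees with the continuous function $f_T \star g_T$ on the interval $[0,T]$. For $t \in [0,T]$, the integrand defining $(f_T \star g_T)(t) = \int_\R f_T(s) g_T(t-s)\, ds$ is nonzero only when both $s \in [0,T]$ and $t-s \in [0,T]$; intersecting these constraints and using $0 \leq t \leq T$ collapses the effective domain of integration to $s \in [0,t]$, on which $f_T = f$ and $g_T = g$. Hence $(f_T \star g_T)(t) = \int_0^t f(s) g(t-s)\, ds = (f \star g)(t)$ for every $t \in [0,T]$. Since $f \star g$ thus coincides locally with a continuous function, and $T$ may be taken arbitrarily large, $f \star g$ is continuous on $\R^+$; continuity for $t < 0$ is trivial, and continuity across the origin follows because $(f_T \star g_T)(0) = 0$ matches the value of $f \star g$ on the negative axis.

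I do not expect any genuine obstacle here: the argument is a routine localization. The only point requiring a little care is the bookkeeping in the third step, namely verifying that the truncated and untruncated convolutions agree on $[0,T]$ and that the resulting continuity glues correctly across $t = 0$ with the identically-zero values on $\{t < 0\}$.
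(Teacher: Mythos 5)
Your proposal is correct and follows essentially the same route as the paper: truncate $f$ and $g$ to $[0,T]$, invoke \cref{lem:cts-conv-general} for the truncations, observe that the truncated and untruncated convolutions agree on $[0,T]$, and glue with the zero values on the negative axis. The paper's own proof is this exact argument, with slightly less explicit bookkeeping in the step where the two convolutions are shown to coincide.
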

\begin{proof}
Since $f, g$ have support in $\R^+$, we have
\begin{equation}
\text{supp}(f \star g) = \left\lbrace
x + y \mid x \in \R^+, y \in \R^+
\right\rbrace = \R^+.
\end{equation}
Now we show continuity. Let $T>0$. For a function $h$, let us use the notation $h_T := h \ind{[0,T]}$. It follows from the definition that for any $u \in [0,T]$,
\begin{equation} \label{eq:fTgT}
(f \star g)(u) = \int_0^u f(u-s)g(s) \, ds = 
\int_0^u f_T(u-s)g_T(s) \, ds = 
(f_T \star g_T)(u).
\end{equation}
Because $f \in L^p_{loc}, g \in L^q_{loc}$, it follows that $f_T \in L^p, g_T \in L^q$. By using \cref{lem:cts-conv-general} we know $f_T \star g_T$ is continuous on $\R$. By \eqref{eq:fTgT} this implies that $f \star g$ is continuous on $[0,T)$. Since $T$ was arbitrary, we conclude that $f \star g$ is continuous on $\R_0^+$. Finally, since $f \star g$ has support on $\R^+$ and $(f \star g)(0)=0$, it is in fact continuous in $\R$.
 
\end{proof}

\begin{dfn} \label{dfn:resolvent}
For a kernel $K \in L^1_{loc}(\R^+, \R^{d \times d})$, the resolvent of $K$ (also called resolvent of the second kind)  is the unique kernel $R \in L^1_{loc}(\R^+, \R^{d \times d})$ that solves the resolvent equation
\begin{equation}\label{eq:resolvent-eq}
K \star R = R \star K = K-R.
\end{equation}
For more details, see \cite{affine}.
\end{dfn}

\begin{rmk} \label{rmk:Gripen}
The resolvent inherits various properties of the original kernel $K$. Indeed, if $K \in L^p_{loc}$, with $1  \leq p \leq +\infty$, then also $R \in L^p_{loc}$. For more details, see \cite[Theorem 3.5]{Gripen1990}.
\end{rmk}

\begin{lem} \label{lem:eB-exact}
Let $K$ be the fractional kernel $K(x) = x^{\alpha-1}$, where $\alpha \in (1/2,1)$. Let $\theta \in \R \setminus \{0\}$. Let $R_\theta$ be the resolvent of $K\theta$, and $E_\theta = K - R_\theta \star K$. Then
\begin{equation}\label{eq:RB-exact}
R_\theta(t) = \theta t^{\alpha-1}\psi(t) 
\end{equation}
and
\begin{equation}\label{eq:eB-exact}
E_\theta(t) =t^{\alpha-1} \psi(t),
\end{equation}
where $\psi$ is the continuous function defined on $\R_0^+$ by
\begin{equation}
\psi(t) = \Gamma(\alpha) E_{\alpha, \alpha}(-\theta \Gamma(\alpha)t^\alpha)
\end{equation}
and $E$ denotes the \ML{}
\begin{equation} \label{eq:ml-function}
E_{\alpha, \beta}(z) = \sum_{n=0}^{+\infty} \frac{z^n}{\Gamma(\alpha n + \beta)}.
\end{equation}

\end{lem}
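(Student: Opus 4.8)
The plan is to avoid constructing the resolvent from scratch and instead to verify that the claimed expression satisfies the defining resolvent equation, after which uniqueness does the rest. Write $\tilde{R}(t) = \theta\, t^{\alpha-1}\psi(t)$ for the candidate appearing in \eqref{eq:RB-exact}. Since $\alpha\in(1/2,1)$, the singularity $t^{\alpha-1}$ at the origin is integrable and $\psi$ is continuous, so $\tilde{R}\in L^1_{loc}(\R^+,\R)$; it therefore lies in the class in which \cref{dfn:resolvent} guarantees the resolvent is unique. Consequently it suffices to check that $\tilde{R}$ solves $(\theta K)\star\tilde{R} = \tilde{R}\star(\theta K) = \theta K - \tilde{R}$, the two left-hand expressions coinciding because scalar kernels commute under convolution. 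Once this is verified, uniqueness forces $R_\theta = \tilde{R}$, which is precisely \eqref{eq:RB-exact}.

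The verification I would carry out in the Laplace domain, where convolution becomes multiplication and the resolvent equation becomes algebraic. The ingredients are the fractional-kernel transform $\widehat{K}(z)=\Gamma(\alpha)z^{-\alpha}$ and the Mittag-Leffler transform
\begin{equation}
\mathcal{L}\left\{ t^{\alpha-1} E_{\alpha,\alpha}(-\theta\Gamma(\alpha)\, t^\alpha) \right\}(z) = \frac{1}{z^\alpha + \theta\Gamma(\alpha)},
\end{equation}
valid on a suitable right half-plane, which give $\widehat{\tilde{R}}(z) = \theta\Gamma(\alpha)/(z^\alpha+\theta\Gamma(\alpha))$. One then checks the single algebraic identity $\widehat{\theta K}\,\widehat{\tilde{R}} = \widehat{\theta K} - \widehat{\tilde{R}}$ by direct computation: both sides collapse to $(\theta\Gamma(\alpha))^2\big/\big(z^\alpha(z^\alpha+\theta\Gamma(\alpha))\big)$. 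Since both sides of the time-domain resolvent equation lie in $L^1_{loc}$ with controlled growth, injectivity of the Laplace transform on this class upgrades equality of transforms to equality of functions, establishing \eqref{eq:RB-exact}.

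For \eqref{eq:eB-exact} I would not recompute any transform but instead exploit the resolvent equation directly. Dividing $\tilde{R}\star(\theta K)=\theta K-\tilde{R}$ by $\theta$ yields $R_\theta\star K = K - R_\theta/\theta$, whence
\begin{equation}
E_\theta = K - R_\theta\star K = \frac{R_\theta}{\theta} = t^{\alpha-1}\psi(t),
\end{equation}
which is exactly \eqref{eq:eB-exact}. Continuity of $\psi$ on $\R_0^+$ is immediate from the locally uniform convergence of the Mittag-Leffler series.

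The main obstacle is analytic bookkeeping rather than any conceptual difficulty: I must confirm that $\tilde{R}$ and both sides of the resolvent equation lie in a class on which the Laplace transform is defined on a common right half-plane and is injective there, with particular care in the case $\theta<0$, where the argument $-\theta\Gamma(\alpha)t^\alpha$ tends to $+\infty$ and the Mittag-Leffler factor grows exponentially, shifting the half-plane of convergence to $\Re z > (|\theta|\Gamma(\alpha))^{1/\alpha}$. A fully self-contained alternative that sidesteps the Laplace-inversion machinery is to expand the resolvent as its Neumann series $R_\theta = \sum_{n\ge1}(-1)^{n-1}(\theta K)^{\star n}$ and apply the convolution-power identity $K^{\star n}(t)=\Gamma(\alpha)^n\, t^{n\alpha-1}/\Gamma(n\alpha)$, itself a consequence of the Beta integral, to resum the series term by term into $\theta\Gamma(\alpha)\,t^{\alpha-1}E_{\alpha,\alpha}(-\theta\Gamma(\alpha)t^\alpha)$; this route instead demands justifying convergence and term-by-term convolution of the series.
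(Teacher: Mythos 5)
Your argument is correct, and its second half—deriving $E_\theta = R_\theta/\theta$ from the resolvent equation $R_\theta \star (\theta K) = \theta K - R_\theta$ and commutativity/linearity of scalar convolution—is exactly what the paper does. Where you genuinely differ is the first half: the paper simply \emph{cites} the known fact (Table~1 of the affine Volterra reference) that the resolvent of $c\,t^{\alpha-1}/\Gamma(\alpha)$ is $c\,t^{\alpha-1}E_{\alpha,\alpha}(-ct^{\alpha})$ and specializes to $c=\theta\Gamma(\alpha)$, whereas you reprove that fact, either by verifying the resolvent equation in the Laplace domain (your algebra $\widehat{\theta K}\,\widehat{\tilde R}=\widehat{\theta K}-\widehat{\tilde R}$, with both sides equal to $(\theta\Gamma(\alpha))^2/\bigl(z^{\alpha}(z^{\alpha}+\theta\Gamma(\alpha))\bigr)$, checks out) or by resumming the Neumann series via $K^{\star n}(t)=\Gamma(\alpha)^n t^{n\alpha-1}/\Gamma(n\alpha)$. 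Both of your routes are standard and sound; the price is the bookkeeping you yourself flag (injectivity of the Laplace transform on a common half-plane, which does shift to $\Re z>(|\theta|\Gamma(\alpha))^{1/\alpha}$ when $\theta<0$, or term-by-term convolution of the series), while the gain is a self-contained proof rather than an appeal to an external table. Uniqueness of the resolvent in $L^1_{loc}$, which your verification strategy leans on, is indeed built into the paper's \cref{dfn:resolvent}, so that step is legitimate. As a minor point in your favor, the paper's own displayed computation contains stray notation (a leftover symbol $B$ and an apparent sign slip in the intermediate line $-t^{\alpha-1}B\psi(t)$) that your cleaner derivation avoids.
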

\begin{proof}
It is a known fact (see \cite[Table~1]{affine}) that the resolvent of $ct^{\alpha-1}/\Gamma(\alpha)$ is $ct^{\alpha-1}E_{\alpha, \alpha}(-ct^\alpha)$, for $c \in \R$. Thus, the resolvent of $K\theta$ is
\begin{equation} 
R_\theta(t) = \theta \Gamma(\alpha)t^{\alpha-1}E_{\alpha, \alpha}(-\theta \Gamma(\alpha)t^\alpha) = -t^{\alpha-1}B\psi(t).
\end{equation}
By the resolvent equation \eqref{eq:resolvent-eq} and noting we are working in a one-dimensional setting it follows that
\begin{align}
\theta E_\theta &= \theta K + B(R_\theta \star K) \\
&= K\theta - R_\theta \star (K\theta) \\
&= K\theta - (K\theta  - R_\theta) \\
&= R_\theta.
\end{align}
Hence
\begin{equation}
E_\theta = \frac{1}{\theta} R_\theta = t^{\alpha-1}\psi(t).
\end{equation}
The fact that $\psi$ is continuous follows trivially from the fact that $\alpha >0$ and the \ML{} is an entire function as long as $\alpha, \beta \in \R^+$.

\end{proof}

\begin{lem}
Let $\alpha \in (1/2,1]$. Then
\begin{equation}
\int_0^u  t^{\alpha-1} E_{\alpha, \alpha}(-t^\alpha) \, dt
= 1 - E_{\alpha, 1}(-u^\alpha).
\end{equation}
\end{lem}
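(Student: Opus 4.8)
The plan is to prove the identity by expanding the Mittag-Leffler function into its defining power series, integrating term by term, and then re-indexing the resulting series to recognize $E_{\alpha,1}(-u^\alpha)$. Using the definition \eqref{eq:ml-function}, the integrand becomes
\begin{equation}
t^{\alpha-1}E_{\alpha,\alpha}(-t^\alpha) = \sum_{n=0}^\infty \frac{(-1)^n\, t^{\alpha(n+1)-1}}{\Gamma(\alpha(n+1))}.
\end{equation}
First I would integrate each term over $[0,u]$, which is elementary since $\int_0^u t^{\alpha(n+1)-1}\,dt = u^{\alpha(n+1)}/(\alpha(n+1))$, the exponent $\alpha(n+1)-1 > -1$ guaranteeing integrability at the origin for every $n$.

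Next I would invoke the functional equation $\Gamma(z+1)=z\Gamma(z)$ with $z=\alpha(n+1)$ to absorb the factor $\alpha(n+1)$ into the Gamma function, turning the $n$th integrated term into $(-1)^n u^{\alpha(n+1)}/\Gamma(\alpha(n+1)+1)$. Setting $m=n+1$ then rewrites the sum as $-\sum_{m=1}^\infty (-u^\alpha)^m/\Gamma(\alpha m+1)$. Comparing with $E_{\alpha,1}(-u^\alpha)=\sum_{m=0}^\infty (-u^\alpha)^m/\Gamma(\alpha m+1)$, whose $m=0$ term equals $1/\Gamma(1)=1$, yields the claimed value $1-E_{\alpha,1}(-u^\alpha)$.

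The only genuine obstacle is justifying the interchange of summation and integration. Since $E_{\alpha,\alpha}$ is an entire function (as $\alpha>0$), its Taylor series converges uniformly on every compact subset of $\R$; multiplying by the fixed integrable weight $t^{\alpha-1}$ on $[0,u]$, the termwise-positive bound $\sum_{n} t^{\alpha(n+1)-1}/\Gamma(\alpha(n+1))$ is itself integrable (its integral is the convergent series $\sum_n u^{\alpha(n+1)}/\Gamma(\alpha(n+1)+1) = E_{\alpha,1}(u^\alpha)-1 < \infty$). Hence the dominated convergence theorem, applied to the partial sums, legitimizes the term-by-term integration. With that justification in hand the remaining manipulations are purely formal, so I expect this to be the main point requiring care while the algebra is routine.
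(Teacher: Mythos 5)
Your proposal is correct and follows essentially the same route as the paper: expand $E_{\alpha,\alpha}(-t^\alpha)$ into its power series, integrate term by term, absorb the factor $\alpha(n+1)$ via $\Gamma(z+1)=z\Gamma(z)$, re-index, and justify the interchange by noting that the corresponding positive-term series integrates to $E_{\alpha,1}(u^\alpha)-1<\infty$. The paper phrases the interchange as an application of Fubini's theorem rather than dominated convergence of partial sums, but this is an immaterial difference.
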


\begin{proof}
By definition, 
\begin{equation}
E_{\alpha, \alpha}(-t^\alpha) = \sum_{n=0}^{+\infty} (-1)^n\frac{t^{\alpha n}}{\Gamma(\alpha(n+1))}.
\end{equation}
Hence
\begin{equation}
t^{\alpha-1} E_{\alpha, \alpha}(-t^\alpha) =  \sum_{n=0}^{+\infty} (-1)^n\frac{t^{\alpha (n+1)-1}}{\Gamma(\alpha(n+1))}.
\end{equation}

Provided we can apply Fubini's theorem:
\begin{align*}
\int_0^u  t^{\alpha-1} E_{\alpha, \alpha}(-t^\alpha) \, dt &= 
\sum_{n=0}^{+\infty} 
\int_0^u 
\frac{(-1)^n}{\Gamma(\alpha(n+1))}
t^{\alpha (n+1)-1} 
\, dt \\
&= \sum_{n=0}^{+\infty} 
(-1)^n
\frac{u^{\alpha(n+1)}}{\alpha(n+1)\Gamma(\alpha(n+1))} \\
&=  - \sum_{k=1}^{+\infty} (-1)^k
\frac{u^{\alpha k}}{\alpha k\Gamma(\alpha k)}  \\
&=  1- \sum_{k=0}^{+\infty} (-1)^k
\frac{u^{\alpha k}}{\Gamma(\alpha k +1)} \\
&= 1 - E_{\alpha, 1}(-u^\alpha).
\end{align*}
The application of Fubini's theorem is justified by the fact that, proceeding as above, we obtain
\begin{equation}
\sum_{n=0}^{+\infty} 
\int_0^u 
\frac{1}{\Gamma(\alpha(n+1))}
t^{\alpha (n+1)-1} 
\, dt = E_{\alpha, 1}(u^\alpha) -1 < \infty.
\end{equation}

\end{proof}

\begin{lem} \label{lem:eB-integral}
Let $R_\theta$ be as in \cref{lem:eB-exact}, with $\theta>0$. Then
\begin{equation}
\int_0^u R_\theta(t) \, dt = 1- E_{\alpha, 1}\left(-\theta\Gamma(\alpha) u^\alpha\right).
\end{equation}
\end{lem}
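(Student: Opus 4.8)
The plan is to reduce this integral to the preceding lemma by a single scaling substitution, thereby recycling the term-by-term computation already carried out there. By \cref{lem:eB-exact} we have the explicit form $R_\theta(t) = \theta\Gamma(\alpha)\,t^{\alpha-1}E_{\alpha,\alpha}(-\theta\Gamma(\alpha)t^\alpha)$. Writing $c := \theta\Gamma(\alpha)$, which is strictly positive since $\theta>0$ and $\alpha>0$, the integrand becomes $c\,t^{\alpha-1}E_{\alpha,\alpha}(-c\,t^\alpha)$, and the goal is to strip off the constant $c$ by rescaling the variable of integration.

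First I would perform the change of variables $s = c^{1/\alpha}t$, which is legitimate precisely because $c>0$. Under this substitution the Mittag-Leffler argument simplifies, $-c\,t^\alpha = -s^\alpha$, so the Mittag-Leffler factor becomes $E_{\alpha,\alpha}(-s^\alpha)$; and a short bookkeeping of the powers of $c$ shows that the prefactor and the differential combine cleanly as $c\,t^{\alpha-1}\,dt = s^{\alpha-1}\,ds$, since $t^{\alpha-1} = c^{-(\alpha-1)/\alpha}s^{\alpha-1}$ and $dt = c^{-1/\alpha}\,ds$, and the three exponents of $c$ sum to $1-\tfrac{\alpha-1}{\alpha}-\tfrac{1}{\alpha}=0$. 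The upper limit $t=u$ maps to $s = c^{1/\alpha}u$, so that
\[
\int_0^u R_\theta(t)\,dt = \int_0^{c^{1/\alpha}u} s^{\alpha-1}E_{\alpha,\alpha}(-s^\alpha)\,ds.
\]

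Then I would invoke the preceding lemma, applied with the upper limit $c^{1/\alpha}u$ in place of $u$, to obtain $1 - E_{\alpha,1}\bigl(-(c^{1/\alpha}u)^\alpha\bigr) = 1 - E_{\alpha,1}(-c\,u^\alpha)$, and substituting back $c=\theta\Gamma(\alpha)$ yields the claim. There is no genuine obstacle in this argument; the only delicate point is verifying that the powers of $c$ cancel exactly in the differential, which is the computation displayed above. For completeness one could instead integrate the Mittag-Leffler series term by term directly, with a Fubini justification identical to that of the preceding lemma, but the scaling argument is the more economical route since it reuses work already done.
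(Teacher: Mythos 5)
Your proof is correct and follows essentially the same route as the paper: substitute $x = c^{1/\alpha}t$ with $c=\theta\Gamma(\alpha)>0$ to reduce the integral to the preceding term-by-term Mittag-Leffler lemma. Your bookkeeping of the powers of $c$ is in fact more carefully displayed than the paper's own computation.
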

Let $c=\theta\Gamma(\alpha)>0$. Make the change of variables $x=c^{1/\alpha}t$, so that $dt = c^{1-1/\alpha} dx$. Then
\begin{align*}
\int_0^u R_\theta(t) \, dt &= \int_0^{c^{1/\alpha}u} x^{\alpha-1} 
E_{\alpha, \alpha}(- x^\alpha) \, dx \\
&=  \int_0^{uc^{1/\alpha}} x^\alpha 
E_{\alpha, \alpha}(- x^\alpha) \, dx  \\
&= 1 - E_{\alpha,1}(-c u^\alpha).
\end{align*}

\newpage
\section{Girsanov change of measure} \label{sec:girsanov}

The Brownian motions involved in our model are written in terms of the three-dimensional Brownian motion
\begin{equation}
\textbf{B}^\mathbb{P} = (\bar{W}^\mathbb{P}, Z^\mathbb{P},  \bar{Z}^\mathbb{P}).
\end{equation}
For this Brownian motion, a general Girsanov change of measure is given by
\begin{equation}
\frac{d\mathbb{Q}}{d\mathbb{P}} = M_T,
\end{equation}
where
\begin{equation}
M_t = \mathcal{E}\left( \int \alpha \cdot d\textbf{B}^\mathbb{P} \right)_t = \exp
\left(
\int_0^t \alpha_s \cdot d\textbf{B}_s^\mathbb{P} - 
\frac{1}{2}
\int_0^t
\norm{\alpha_s}^2 ds
\right),
\end{equation}
and
\begin{equation}
\alpha_s = (a_s, b_s, c_s),
\end{equation}
for adapted processes $a, b, c$.

Intuitively, we only need to apply the change of measure to two sources of randomness: the price and volatility drivers. Thus, the Brownian motion $Z$ may be left unchanged and we may set $b_t \equiv 0$. The above change of measure implies that
\begin{equation}
a_t = \frac{1}{\bar{\rho}}(u_t + \rho \lambda_t),
\end{equation}
and
\begin{equation}
c_t = - \frac{1}{\bar{\eta}} \lambda_t,
\end{equation}
where $\vartheta$ is the market price of risk $\vartheta_t = \zeta_t/\sqrt{v_t}>0$ and $\lambda_t$ is the change of measure for the volatility component, as in \eqref{eq:lambda-dW}. In the case of our \fOU regime switching change of measure, the process $\lambda_t$ is given by \eqref{eq:lambda-fOU}. To ensure $\mathbb{P} \sim \mathbb{Q}$, we still have to verify that
\begin{equation}
\evM{P}{ M_T } = 1.
\end{equation}
To this end, we adapt the proof from the ones found in \cite[Lemma 7.3]{affine} and \cite[Appendix C]{Jaber2020MarkowitzPS}. Consider the stopping times
\begin{equation}
\tau_n = \inf \{
t > 0 \mid \max_{i=1,2} U_t^{(i) } > n
\} \land T,
\end{equation}
where 
\begin{equation}
U_t^1 = \int_0^t \vartheta_s^2 \, ds 
\end{equation}
and \begin{equation}
U_t^2 = \int_0^t X_s^2 \, ds. 
\end{equation}
Consider now the processes $\alpha_n$ defined by
\begin{equation}
\alpha^n_s = \ind{ \{ s \leq \tau_n \}} \alpha_s
\end{equation}
and the corresponding sequence of measures $\mathbb{Q}^n$  
\begin{equation}
\frac{d\mathbb{Q}^n}{d \mathbb{P}} = M^{\tau_n}_T = M_{\tau_n},
\end{equation}
where the process $M^{\tau_n}$ is defined by
\begin{equation}
M^{\tau_n}_t = \mathcal{E}\left( \int \alpha^n \cdot d\textbf{B}^\mathbb{P} \right)_t.
\end{equation}
We have not specified the dynamics of the market price of risk $\vartheta$, since it is outside the scope of this paper, but we will assume it can be controlled in the same way as $X$ in the following sense:
\begin{assumption} \label{ass:market-price}
There exists a constant $C_2^\prime$, which does not depend on $n$, such that for all $n \geq 1$,
\begin{equation}
\sup_{0 \leq t \leq T} \mathbb{Q}^n \left[
\vartheta_t^2
 \right] \leq C_2^\prime.
\end{equation}
\end{assumption}
For more details concerning this assumption and some sufficient conditions to satisfy it, see \cref{rmk:ass-market-price}.

We can now verify Novikov's condition for $\alpha^n$. Indeed
\begin{align*}
\norm{ \alpha_t^n}^2 &=  
 a_t^2 + c_t^2 \\
 &= \frac{1}{1-\rho^2}(\vartheta_t + \rho \lambda_t)^2 + \frac{1}{1-\eta^2} \lambda_t^2  \\
&\leq  \frac{2}{1-\rho^2} (\vartheta_t^2 + \rho^2 \lambda_t^2) + \frac{1}{1-\eta^2} \lambda_t^2,
\end{align*}
where we used the useful inequality $(a+b)^2 \leq 2(a^2 + b^2)$ for $a, b \in \R$. Recall that $\lambda_t = \theta(\mu_t - X_t)$, where $\mu$ follows a CTMC and hence is bounded by a certain number $C_\mu \in \R$. Thus
\begin{equation}
\lambda_s^2 \leq 2\theta^2( C_\mu^2 + X_t^2).
\end{equation}
Putting it all together, it follows that
\begin{equation}
\norm{ \alpha_t^n}^2 \leq \ind{ \{ s \leq \tau_n \}} \left( \beta_0 + \beta_1 \vartheta_t^2 + \beta_2 X_t^2 \right),
\end{equation}
where the constants $\beta_0, \beta_1, \beta_2$ are given by
\begin{equation}
\begin{split}
\beta_0 &= 2\theta^2C_\mu^2 \left( 
\frac{2 \rho^2}{1-\rho^2} + \frac{1}{1-\eta^2}
\right), \\
\beta_1 &= \frac{2}{1-\rho^2}, \\
\beta_2 &= \frac{1}{C_\mu^2}\beta_0.
\end{split}
\end{equation}
Hence
\begin{equation}
\int_0^T \norm{ \alpha^n_s }^2 \, ds \leq \beta_0T + \beta_1  U^1_{\tau_n}+ \beta_2 U^2_{\tau_n}.
\end{equation}
By construction, $U^i_{\tau_n} \leq n$ a.s. for $i=1,2$. 
Thus, Novikov's condition is verified and therefore the process $M^{\tau_n}$ is a true martingale. Hence
\begin{equation} \label{aux:novikov-n}
1 = M^{\tau_n}_0 = \ev{ M^{\tau_n}_T} = \ev{M_{\tau_n}} = \ev{
M_{\tau_n} \ind{ \{ \tau_n < T \} }
} + \ev{
M_T \ind{ \{ \tau_n = T \}}
}
\end{equation}
Since $M$ is a supermartingale, $M_T \in L^1$. Moreover, since $X$ is continuous, $U_T < \infty$ a.s. and thus $\tau_n \to T$ a.s. when $n \to \infty$. By the dominated convergence theorem it follows that
\begin{equation}
\ev{ M_T \ind{ \{ \tau_n = T \}} } \to \ev{M_T}.
\end{equation}
We are left to show that
\begin{equation}
\ev{M_{\tau_n} \ind{ \{ \tau_n < T \} }} \to 0.
\end{equation}
By construction, we have that
\begin{align*}
\ev{M_{\tau_n} \ind{ \{ \tau_n < T \} }} &=
 \mathbb{Q}^n\left[ \tau_n < T \right]  \\
&\leq \sum_{i=1}^2 \mathbb{Q}^n\left[ U_T^i > n \right] \\
&\leq \frac{1}{n}  \sum_{i=1}^2 \mathbb{Q}^n \left[
|U_T^{(i)}|
\right],
\end{align*}
where we used the Markov inequality in the last step.

For more general processes, we could proceed here as in \cite[Lemma 7.3]{affine}, but since we know our process $X$ exactly we will simplify the argument. Indeed, by \eqref{eq:sol-time-dep}, we know that
\begin{equation}
|X_t| \leq f(t) + \sigma |Y_t|.
\end{equation}
where $f$ is the continuous function
\begin{equation}
f(t) = |g(t)| + C_\mu \int_0^t E_\theta(s) \, ds.
\end{equation}
Since $Y$ is Gaussian and its variance is a continuous function, described in \eqref{eq:et}, it follows that there exists a constant $C_2 \in \R$ such that
\begin{equation}
\sup_{0 \leq t \leq T} \ev{|X_t|^2} \leq C_2.
\end{equation}
By the Fubini-Tonelli theorem
\begin{equation}
\ev{ |U_T^{(2)}|} = \ev{ \int_0^T X_s^2 \, ds } \leq T \sup_{0 \leq s \leq T} \ev{X_s^2} \leq TC_2.
\end{equation}
The sBm $Z$ is also a $\mathbb{Q}^n$-sBm since $b_t \equiv 0$ in the change of measure. Thus, the above constant does not depend on $n$. Finally, note that \cref{ass:market-price} guarantees that
\begin{equation}
\ev{ |U_T^{(1)}|}  \leq T C_2'.
\end{equation}
\begin{rmk}
The fact that the Brownian motion $Z$ is not affected by the change of measure is by no means essential to the argument, as it can be seen in the proof of \cite[Lemma 7.3]{affine} and also in the argument for $\vartheta$ in \cref{rmk:ass-market-price} bellow.
\end{rmk}

\begin{rmk} \label{rmk:ass-market-price}
The \cref{ass:market-price} may look \textit{ad hoc}, but note that it will be satisfied if $\vartheta$ itself satisfies an equation similar to that of $X$. Indeed, assume 
\begin{equation}
\vartheta = \vartheta_0 + K \star (q \, dt + \sigma d\textbf{B}^\mathbb{P}).
\end{equation}
The above is equivalent to
\begin{equation}
\vartheta = \vartheta_0 + K \star ( \tilde{q} \, dt + \sigma d\textbf{B}^{\mathbb{Q}^n}),
\end{equation}
where \begin{equation}
\tilde{q}(t, x, \omega) = q(t, x, \omega) + \sigma \ind{ \{t \leq \tau_n\} }(\omega).
\end{equation}
Provided $q$ satisfies a uniform linear growth condition on $x$, so will $\tilde{q}$. Moreover, since $\alpha^n$ verifies Novikov's condition, we are guaranteed that $\textbf{B}^{\mathbb{Q}^n}$ is a $\mathbb{Q}_ n$-sBm. The existence of the constant $C_2'$ will then be guaranteed again by \cite[Lemma 3.1, Remark 3.2]{affine}.

Note also that the assumption will also be satisfied if $\vartheta$ can be written as a function of such process (which can even be $X$) and that function itself satisfies the linear growth condition.
\end{rmk}

\newpage
\section{Weakly singular kernels} \label{sec:num-integral}
When dealing with fractional processes, it is common to encounter integrals with weakly singular kernels of the form
\begin{equation}
\int_0^u s^x (u-s)^y \varphi(s) \, ds,
\end{equation}
where $-1 < x,y < 0$ and $\varphi$ is a continuous function on $[0,u]$. To approximate such integrals, we use an idea similar to \cite{Hybrid2017} and write them as
\begin{equation}
\int_0^u s^x (u-s)^y \varphi(s) \, ds = I_1 + I_2 + I_3,
\end{equation}
where
\begin{equation}
\begin{split}
I_1 &= \int_0^\varepsilon s^x (u-s)^y \varphi(s) \, ds, \\
I_2 &= \int_\varepsilon^{u-\varepsilon} s^x (u-s)^y \varphi(s) \, ds, \\
I_3 &= \int_{u-\varepsilon}^u s^x (u-s)^y \varphi(s) \, ds,
\end{split}
\end{equation}
for a small $\varepsilon>0$, which is usually taken to be the grid time step. The integral $I_2$ can be approximated by a standard quadrature method since the integrand does not contain any singularities. For integrals $I_1$ and $I_3$, we approximate the non-singular part by a constant and integrate the singular part analytically:
\begin{equation}
\begin{split}
I_1 &\approx \frac{\phi(0)u^y + \phi(\varepsilon)(u-\varepsilon)^y }{2} \frac{\varepsilon^{x+1}}{x+1}, \\
\\
I_3 &\approx \frac{\phi(u-\varepsilon)(u-\varepsilon)^x + \phi(u)u^x }{2} \frac{\varepsilon^{y+1}}{y+1}.
\end{split}
\end{equation}

\newpage
\bibliography{bib}

\end{document}